\documentclass[12pt]{article}
\usepackage{setspace}
\usepackage{amsmath,amssymb}
\usepackage[T1]{fontenc}
\usepackage[bf,small,tableposition=top]{caption}
\usepackage{subfig}
\usepackage{amsthm}
\usepackage{lineno}
\usepackage[figuresright]{rotating}
\usepackage{enumerate}
\usepackage{authblk}
\usepackage[utf8]{inputenc}
\usepackage{graphics}
\usepackage{graphicx}
\usepackage{multicol}
\usepackage{lscape}
\usepackage[dvipsnames]{xcolor}
\usepackage{mathrsfs}
\usepackage{hyperref}

\usepackage{newunicodechar}
\newunicodechar{̧}{\c{}}

\usepackage{float}

\usepackage{verbatim}
\usepackage{diagbox}
\usepackage{multirow,bigdelim,dcolumn,booktabs}

\newtheorem{thm}{Theorem}
\newtheorem{lem}{Lemma}

\theoremstyle{definition}

\theoremstyle{remark}
\newtheorem{cor}{Corollary}
\newtheorem{rmk}{Remark}

\newcolumntype{H}{>{\setbox0=\hbox\bgroup}c<{\egroup}@{}}
\usepackage[authoryear]{natbib}
\title{A Stein Characterization-type Omnibus Tests for the Discrete Pareto Distribution}

\author
{Deepesh Bhati$^{1}$, Bruno Ebner$^{2}$, Sakshi Khandelwal$^{1}$\footnote{corresponding author: GSAKSHI1506@gmail.com} \\
\normalsize{$^{1}$Department of Statistics, Central University of Rajasthan, Ajmer, India \\
$^2$Institute of Stochastics, Karlsruhe Institute of Technology (KIT), Karlsruhe, Germany.}
}

\date{}

\begin{document} 


\baselineskip13pt

\maketitle 


\begin{abstract}
The discrete Pareto (or Zeta, Zipf) distribution, arises naturally in modeling rank-frequency data across diverse fields such as linguistics, demography, biology, and computer science. Despite its widespread applicability, goodness-of-fit testing for the discrete Pareto distribution remains underdeveloped, particularly in the presence of heavy tails and infinite support. This article introduces a novel goodness-of-fit test based on a new Stein-type characterization of the discrete Pareto distribution, formulated using its probability generating function. The proposed method is applicable even when the shape parameter is unknown and avoids binning or smoothing techniques. We study the asymptotic properties of the test and assess its empirical size and power through extensive simulation experiments. The results show that the proposed test either outperforms or matches the performance of existing method across various alternatives. Applications to real datasets are provided to demonstrate its practical relevance and robustness.
\end{abstract}

\noindent \textbf{Keywords:} Goodness-of-Fit; Discrete Pareto distribution; Power law; Zipf’s law; Stein’s method; Probability generating function.\\

\noindent \textbf{Mathematics Subject Classifications:} 62F03; 62F05.
\onehalfspacing

\section{Introduction} 
Power laws describe statistical relationships where the probability of an event varies inversely with its magnitude. Specifically, a random variable $X$ is said to follow a power law if its probability mass function satisfies
\begin{equation*}\label{eq:pmf}
p_\nu(k) \propto k^{-\nu}, \quad k\in\mathbb{N}, \nu > 1,
\end{equation*}
where $\nu$ is known as the scaling parameter or exponent. Such distributions are heavy-tailed, meaning that extreme values, while rare, have non-negligible probability. Power-law distributions come in two basic forms: continuous and discrete. Continuous power laws are used when the variable of interest takes real values, while discrete power-law distributions apply when the variable takes values on a countable support, typically the set of natural numbers. Although continuous power laws, such as the classical Pareto distribution, have been studied extensively, discrete power laws pose distinct challenges in terms of modeling, inference, and goodness-of-fit (gof) testing due to their irregular spacing, infinite support, and data sparsity.\\
\indent In this paper, we focus on the discrete Pareto distribution $P_\nu$, with shape parameter $\nu>1$. This distribution has the probability mass function (pmf)
\begin{equation} \label{pmf}
p_{\nu}(k) = \frac{k^{-\nu}}{\zeta(\nu)}, \quad k \in \mathbb{N},\; \nu > 1,   
\end{equation}
\noindent and probability generating function (pgf) defined as $G_X(s)= \mathbb{E}[s^X] = \text{Li}_\nu(s)/\zeta(\nu)$, where $\zeta(\nu) = \sum_{k=1}^{\infty} k^{-\nu}$ and $\text{Li}_{\nu}(s)=\sum_{k=1}^{\infty} s^kk^{-\nu}$ are the Riemann zeta function and polylogarithm function respectively. Henceforth we denote the discrete Pareto Distribution as DPareto distribution, 
and denote the DPareto family of distributions having pmf $p_{\nu}$ given in \eqref{pmf} by $\mathcal{P}=\{ \mathrm{DPareto}(\nu): \nu>1\}$. Note that the moments of the DPareto law only exist for $\nu>2$ (first) and $\nu>3$ (second), so we potentially have a heavy tailed discrete distribution. The DPareto distribution is often used to model rank-frequency data, where a small number of items occur very frequently, and the rest appear rarely. This structure makes it particularly suitable for modeling phenomena with heavy-tailed, skewed frequency distributions. \\
\indent Power-law behavior is pervasive in natural and social systems, appearing in phenomena such as earthquake magnitudes, word frequencies, city populations, financial returns, internet traffic, and biological processes  \citep{newman2005power, mitzenmacher2004brief, clauset2009power}. The earliest known observation of power-law behavior dates back to Pareto’s work on income distribution \citep{pareto1896cours}. In the discrete case, \citet{zipf1949human} is a cornerstone example. Zipf observed that in written English texts, a small number of words are used extremely frequently, while the majority of words occur only occasionally. Since then, numerous studies \citep{baayen2001word, piantadosi2014zipf,hatzigeorgiu2001word} have shown that word frequency distributions in various languages and texts closely follow Zipf's law. Beyond linguistics, Zipf-like distributions have been observed in a variety of fields. \citet{moura2006zipf} showed that the population sizes of Brazilian cities conform to Zipf’s law, while \citet{gan2006zipf} analyzed U.S. and Chinese urban data and reported similar findings. In business contexts, revenues of 500 Chinese companies were shown to follow a Zipf-like structure. \citet{wang2017zipf} applied the distribution to analyze the strength of 13.3 million real-world passwords. More recently, \citet{naryzhny2023quantitative} used Zipf’s law to describe the abundance distribution of proteins and proteoforms in human cells. Owing to its wide applicability, the DPareto distribution has become a cornerstone model in linguistics \citep{baayen2001word}, demography \citep{gan2006zipf}, geography \citep{simon1955class}, biology \citep{mantegna1994linguistic}, climatology \citep{primo2007statistical}, and physics \citep{shreider1967theoretical}, to name a few examples.

\indent Despite its widespread application, statistical inference, especially gof testing for the DPareto distribution remains limited. Let $X_1,\ldots,X_n$ be independent and identically distributed (iid) copies of a random variable $X$ taking values in $\mathbb{N}$ and denote the distribution of $X$ by $\mathbb{P}^X$. The testing problem of interest is to test the composite hypothesis
\begin{equation}\label{eq:H0}
    H_0:\,\mathbb{P}^X\in\mathcal{P}
\end{equation}
against general alternatives based on the sample $X_1,\ldots,X_n$.
Traditional gof tests such as the Pearson chi-square and Kolmogorov–Smirnov (KS) tests are often inadequate in this context. These methods typically require binning, depend heavily on expected cell counts, and may lack power when applied to discrete, heavy-tailed data with infinite support  \citep{read1988goodness}. Moreover, they do not leverage the specific structure of the DPareto model, making them less effective in detecting meaningful departures from the null hypothesis.\\
\indent To address these limitations, recent research has turned to characterization-based testing, where unique properties of a distribution are inverted to form testable conditions. Among these approaches, Stein’s method \citep{stein1972bound} has emerged as a powerful framework for both approximation and testing. Originally developed for the normal distribution, Stein’s method has since been extended to a variety of continuous and discrete distributions \citep{ley2017stein}. At its core, Stein’s method provides distribution-specific identities, so called Stein characterizations, that can be transformed into gof tests \citep[Section 5.2]{Anastasiou2023}.\\
\indent In parallel, the pgf has proven to be an essential tool for studying discrete distributions. It encapsulates the entire distribution and is particularly well-suited for deriving functional identities and moments. While pgf-based approaches have been used in distribution theory and estimation, their potential in gof testing, especially in conjunction with Stein’s method has not been fully explored for discrete power law models.\\
\indent In this paper, we propose a novel gof test for the DPareto distribution, leveraging a Stein-type characterization combined with its pgf. We derive a new identity that uniquely characterizes the DPareto distribution and construct a test statistic based on the empirical deviation from this identity. The proposed test avoids arbitrary binning or smoothing and is applicable even when the scaling parameter $\nu$ is unknown. We also apply the test to real-world datasets to demonstrate its practical utility and robustness.\\
\indent The rest of the paper is structured as follows.
Section \ref{newch} presents a new characterization based on Stein's identity for discrete distributions. Section \ref{TSConstruction} describes the construction of new test statistic, and its limiting properties are discussed in Section \ref{limit}. Section \ref{MCstudy} presents Monte Carlo simulation results comparing the proposed method with existing gof test of DPareto distribution. Section \ref{data} provides applications to real-world datasets. Finally, Section \ref{conclusion} concludes with a discussion and possible directions for future research.

\section{A new Stein-type characterization of DPareto distribution} \label{newch}

The use of Stein-type characterizations in conjunction with the pgf provides a powerful framework for constructing gof tests for discrete distributions. Following the discrete Stein identity developed in \citet{ley2011discrete}, a random variable $X$ with support on $\mathbb{N}$ follows a distribution with pmf $p(x)$ if and only if
\[
\mathbb{E}\left( \Delta^+ f(X) + \frac{\Delta^+ p(X)}{p(X)} f(X+1) \right) = 0,
\]
for all test functions $f$ in a suitable function class, where $\Delta^+ f(x) = f(x+1) - f(x)$ is the forward difference operator. Consider $f_s(x) = 1-s^{x-1}$,  $s\in(0,1)$, which ensure $f_s(1)=0$, the above Stein's identity reduces to:
\begin{equation*}
    \mathbb{E} \bigg( (1-s^{X})-(1-s^{X-1}) + \frac{\Delta^+ p(X)}{p(X)} (1-s^{X}) \bigg) = 0, 
\end{equation*}
which leads after elementary calculations to the identity
\begin{equation*}
    (1-s) \, G_X(s) = \mathbb{E} \bigg( - \frac{\Delta^+ p(X)}{p(X)} \, s \, (1-s^{X})\bigg),\quad s\in(0,1).
\end{equation*}
 On the basis of above integration of Stein's characterization with pgf, we construct a new characterization to develop gof test for discrete distributions which is stated as follows. 

\begin{thm} \label{gchth}
Let $X$ be a discrete random variable taking values in $\mathbb{N}$ with pmf $p(x)$, and let $G_X(s)$ denote its pgf. Then $p(x)$ is the pmf of the target distribution if and only if
\begin{equation} \label{ch11}
(1 - s) \, G_X(s) = \mathbb{E} \left( - \frac{\Delta^+ p(X)}{p(X)} \cdot s(1 - s^X) \right), \quad s\in(0,1). 
\end{equation}
\end{thm}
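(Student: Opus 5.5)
The proof compares coefficients of power series in $s$. Write $q(k)=\mathbb{P}(X=k)$, $k\in\mathbb{N}$, for the actual law of $X$, so that $G_X(s)=\sum_{k\ge1}q(k)s^k$. Let $p$ be the target pmf, and assume $p(k)>0$ for all $k\in\mathbb{N}$. Put
\[
r(k)=-\frac{\Delta^+p(k)}{p(k)}=\frac{p(k)-p(k+1)}{p(k)} .
\]
I will assume $\sum_k q(k)|r(k)|<\infty$, i.e.\ $\mathbb{E}|r(X)|<\infty$, so that the right-hand side of \eqref{ch11} is well defined. For the DPareto target, $r(k)=1-(k/(k+1))^{\nu}\in[0,1)$, so this holds automatically for every $q$. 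The identity \eqref{ch11} then reads
\[
\sum_{k\ge1}q(k)\,(s^k-s^{k+1}) \;=\; s\sum_{k\ge1}q(k)r(k)\;-\;\sum_{k\ge1}q(k)r(k)\,s^{k+1},\qquad s\in(0,1).
\]
Both sides are power series in $s$ with absolutely summable coefficients, so they converge on $[0,1)$. By the identity theorem for power series, \eqref{ch11} holding on $(0,1)$ is equivalent to equality of all coefficients. Matching coefficients gives two groups of conditions:
\begin{align*}
\text{coefficient of } s^1:&\quad q(1)=\sum_{k\ge1}q(k)r(k),\\
\text{coefficient of } s^{k+1},\ k\ge1:&\quad q(k+1)-q(k)=-q(k)r(k).
\end{align*}

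\emph{Sufficiency.} Suppose \eqref{ch11} holds. The second group of conditions gives $q(k+1)=q(k)\bigl(1-r(k)\bigr)=q(k)\,p(k+1)/p(k)$. By induction, $q(k)=q(1)\,p(k)/p(1)$ for all $k$. Summing over $k$ and using $\sum_k q(k)=\sum_k p(k)=1$ forces $q(1)=p(1)$, and hence $q\equiv p$. The $s^1$ condition is not even needed here: the recursion together with normalisation already pins down $q$.

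\emph{Necessity.} Take $q=p$. The recursion becomes $p(k+1)-p(k)=-(p(k)-p(k+1))$, which is trivially true. The $s^1$ condition becomes $\sum_k\bigl(p(k)-p(k+1)\bigr)=p(1)$. This holds by telescoping, because the partial sums equal $p(1)-p(N+1)$ and $p(N+1)\to0$ since $p$ is summable. The same computation can also be read off directly from the discrete Stein identity with $f_s(x)=1-s^{x-1}$, as sketched before the theorem.

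The main obstacle is the analytic bookkeeping rather than the algebra. We must justify splitting $\mathbb{E}[r(X)s(1-s^X)]$ into $s\,\mathbb{E}r(X)-\mathbb{E}[r(X)s^{X+1}]$ and rearranging into a single power series. This is exactly where the integrability assumption $\mathbb{E}|r(X)|<\infty$ enters, and I would make it explicit as part of the hypothesis; for DPareto it is free because $0\le r<1$. The other point to state carefully is that the identity theorem applies on the open interval $(0,1)$, since that interval has accumulation points inside the disc of convergence.
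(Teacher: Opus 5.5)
Your proof is correct and is essentially the paper's argument. The paper obtains the identity from the discrete Stein identity with $f_s(x)=1-s^{x-1}$; your telescoping check of the $s^1$ coefficient is the same computation. Its only written-out converse, in the proof of Theorem~\ref{chth}, is exactly your matching of the coefficients of $s^k$ for $k\ge 2$, followed by the product recursion and normalisation. You carry this out for a general target with $p(k)>0$.

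One simplification: your extra hypothesis $\mathbb{E}|r(X)|<\infty$ is automatic. Since $r(k)=1-p(k+1)/p(k)\le 1$, the right-hand side of \eqref{ch11} is well defined in $[-\infty,s]$, so equality with the finite left-hand side already forces it (using $1-s^X\ge 1-s$). For $q=p$ it also holds directly, since $\sum_k p(k)|r(k)|=\sum_k|p(k)-p(k+1)|\le 2$.
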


\begin{rmk}
The equation (\ref{ch11}) stated in Theorem 1 may also be derived by evaluating the pgf of the rv through Stein’s characterization for discrete distributions, as developed in \cite{betsch2022characterizations}.   
\end{rmk}

\noindent Equation (\ref{ch11}) yields a new characterization result for the DPareto distribution, stated as follows.
\begin{thm} \label{chth}
Let $X$ be a random variable taking values in $\mathbb{N}$, with pmf $p(x)$, and let $G_X(s)$ denote its pgf. Then $p(x)$ is the pmf of the DPareto distribution with exponent $\nu>1$ if and only if
\begin{equation*}\label{eq:NullStein}
(1 - s) \, G_X(s) = \mathbb{E} \bigg( \bigg(1- \bigg(\frac{X}{X+1}\bigg)^\nu \bigg) \, s \, (1-s^X)\bigg),\quad s\in(0,1).
\end{equation*}
\end{thm}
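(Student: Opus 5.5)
The plan is to derive Theorem \ref{chth} from Theorem \ref{gchth} by specializing the target pmf to the DPareto law. We also prove the converse directly, since Theorem \ref{gchth} is only a schematic ``if and only if'' and the real content of the converse is a uniqueness argument.

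\emph{Necessity.} For $p=p_\nu$ from \eqref{pmf} we compute
\[
-\frac{\Delta^+ p_\nu(k)}{p_\nu(k)} = 1-\frac{(k+1)^{-\nu}}{k^{-\nu}} = 1-\Big(\frac{k}{k+1}\Big)^\nu ,
\]
and insert this into \eqref{ch11}. To make the argument self-contained, we verify the identity directly by writing both sides as power series in $s$. The right-hand side is
\[
\sum_{k\ge1}p_\nu(k)\big(1-(k/(k+1))^\nu\big)(s-s^{k+1}) .
\]
Because $p_\nu(k)(k/(k+1))^\nu=p_\nu(k+1)$, the coefficient sum telescopes: $\sum_k (p_\nu(k)-p_\nu(k+1)) = p_\nu(1)$. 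Hence the right-hand side equals
\[
p_\nu(1)\,s-\sum_{k\ge1}\big(p_\nu(k)-p_\nu(k+1)\big)s^{k+1} .
\]
Comparing with $(1-s)G_X(s)=\sum_{k\ge1}(p(k)-p(k-1))s^k$, where $p(0)=0$, shows equality. All series converge absolutely for $s\in(0,1)$, since the weights lie in $[0,1]$.

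\emph{Sufficiency.} Suppose $X$ has pmf $p$ on $\mathbb{N}$ and the identity holds for all $s\in(0,1)$. Write $c_k=1-(k/(k+1))^\nu\in(0,1)$. Expanding both sides as power series in $s$ with absolutely summable coefficients gives
\[
\sum_{k\ge1}(p(k)-p(k-1))s^k = \Big(\sum_k p(k)c_k\Big)s - \sum_{k\ge1}p(k)c_k s^{k+1}.
\]
Two power series that agree on $(0,1)$ have equal coefficients, by the identity theorem for analytic functions on the disc. Comparing the coefficient of $s^{k+1}$ for $k\ge1$ yields
\[
p(k+1)-p(k) = -c_k\,p(k), \qquad\text{that is,}\qquad p(k+1)=p(k)\Big(\frac{k}{k+1}\Big)^\nu .
\]
By induction, $p(k)=p(1)k^{-\nu}$. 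Normalization then forces $p(1)=1/\zeta(\nu)$, which is finite because $\nu>1$.

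The coefficient of $s^1$ gives the condition $p(1)=\sum_k p(k)c_k$. This holds automatically by the same telescoping, so it is consistent rather than an extra constraint.

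The main obstacle is the justification in the converse: interchanging expectation and summation, and applying coefficient comparison. Both rest on absolute convergence on $[0,1)$, which follows from $0\le c_k\le1$ and $\sum p(k)=1$. It is also worth noting that $\nu$ is fixed in the statement, so the identity pins down $\mathrm{DPareto}(\nu)$ for that specific $\nu$.
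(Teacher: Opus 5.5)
Your proof is correct and follows essentially the same route as the paper. For necessity, the paper evaluates the right-hand side in closed form via $\text{Li}_\nu(s)$ and $\zeta(\nu)$, while you use the equivalent telescoping identity $p_\nu(k)\bigl(1-(k/(k+1))^\nu\bigr)=p_\nu(k)-p_\nu(k+1)$; for sufficiency, both match power-series coefficients to get $p(k+1)=(k/(k+1))^\nu p(k)$, iterate, and normalize. Your justification of the coefficient comparison and your check that the $s^1$ coefficient is automatically consistent are details the paper leaves implicit.
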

\noindent After further straightforward simplification we get the following characterization.
\begin{cor}\label{cor:char}
The random variable $X$ follows the DPareto($\nu$)-law for some $\nu>1$, if and only if
\begin{equation}\label{eq:Stein_id}
    \mathbb{E}\left(s^{X-1}-1+\left(\frac{X}{X+1}\right)^\nu(1-s^X)\right)=0,\quad s\in(0,1).
\end{equation}
\end{cor}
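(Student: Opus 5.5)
Corollary~\ref{cor:char} follows from Theorem~\ref{chth} by an algebraic rearrangement, plus a check that all expectations are finite so the rearrangement is legitimate. For the DPareto pmf we have $\Delta^+p(x)/p(x) = (x/(x+1))^\nu - 1$, so $-\Delta^+p(X)/p(X) = 1-(X/(X+1))^\nu$. This is exactly the weight appearing in Theorem~\ref{chth}, so I take that identity as the starting point and rewrite both sides in terms of expectations of functions of $X$.

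\textbf{Rewriting both sides.} Write $(1-s)G_X(s) = \mathbb{E}(s^X - s^{X+1})$. Expand the right-hand side as $\mathbb{E}\big(s(1-s^X)\big) - \mathbb{E}\big((X/(X+1))^\nu s(1-s^X)\big)$. Each term is an expectation of a function bounded by $1$ in absolute value, since $s\in(0,1)$, $X\ge 1$ and $0<(X/(X+1))^\nu<1$. Hence all expectations exist for any $\mathbb{N}$-valued $X$, with no moment assumption. This matters because DPareto can lack a mean.

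\textbf{Simplifying.} Move everything to one side:
\[
\mathbb{E}\Big(s^X - s^{X+1} - s + s^{X+1} + \big(X/(X+1)\big)^\nu s(1-s^X)\Big) = 0 .
\]
This simplifies to $\mathbb{E}\big(s^X - s + (X/(X+1))^\nu s(1-s^X)\big)=0$. Dividing by $s>0$ gives exactly \eqref{eq:Stein_id}. Every step is reversible: adding or subtracting finite expectations, and multiplying or dividing by $s\neq0$. So \eqref{eq:Stein_id} holding for all $s\in(0,1)$ with a given $\nu$ is equivalent to the identity of Theorem~\ref{chth} with that $\nu$.

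\textbf{The "for some $\nu$" formulation.} For the forward direction, if $X\sim$ DPareto$(\nu)$, then Theorem~\ref{chth} gives the identity, hence \eqref{eq:Stein_id}. For the converse, suppose \eqref{eq:Stein_id} holds for some $\nu>1$ and all $s\in(0,1)$. Then Theorem~\ref{chth} holds with that $\nu$, and so $X\sim$ DPareto$(\nu)$.

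The only potential obstacle is this converse, which rests entirely on the "if" part of Theorem~\ref{chth}. If one wanted it self-contained, I would expand \eqref{eq:Stein_id} as a power series in $s$ and match coefficients. This should yield the recursion $p(k+1)=p(k)\,(k/(k+1))^\nu$, which forces $p(k)\propto k^{-\nu}$. Within the corollary itself, though, only the bounded-integrand bookkeeping above is needed.
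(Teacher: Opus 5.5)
Your proposal is correct and follows the paper's route: the paper obtains Corollary~\ref{cor:char} from Theorem~\ref{chth} by the same ``straightforward simplification'' (write $(1-s)G_X(s)=\mathbb{E}(s^X-s^{X+1})$, cancel the $s^{X+1}$ terms, divide by $s$). Your remarks on bounded integrands and on the ``for some $\nu$'' quantifier are sound additions that the paper leaves implicit, and your sketched recursion $p(k+1)=p(k)\,(k/(k+1))^\nu$ matches the paper's coefficient-matching argument for the sufficiency part of Theorem~\ref{chth}.
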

\noindent By Corollary \ref{cor:char} a $L^2$-type departure measure can be constructed by
\begin{equation} \label{eq:DM}
\Delta_\nu = \int_0^1 \bigg( \mathbb{E}\left(s^{X-1}-1+\left(\frac{X}{X+1}\right)^\nu(1-s^X)\right) \bigg)^2 \, ds, \quad \nu>1.
\end{equation}
Therefore, based on the characterization, $\Delta_\nu=0$ if and only if the $X\sim\text{DPareto}(\nu)$, $\nu>1$. However, the quantities in \eqref{eq:DM} are unknown, and a test should be constructed based on the empirical counterpart of them.

\section{Test statistics construction} \label{TSConstruction}

In this section we propose a new testing procedure for the testing problem in \eqref{eq:H0} by replacing the expectation in the departure measure $\Delta_\nu$ in \eqref{eq:DM} by it's empirical counterpart and by plugging in a consistent estimator $\widehat{\nu}_n$ for the unknown shape parameter $\nu>1$.

\noindent A suitable choice for the estimator $\widehat{\nu}_n$ is the MLE of $\nu$. The MLE $\widehat{\nu}_n$ of $\nu$ satisfies the equation 
\begin{equation*}
    \frac{\zeta'(\nu)}{\zeta(\nu)} + \frac{1}{n} \sum_{i=1}^n \ln(x_i)=0.
\end{equation*}
It is not difficult to obtain an acceptable solution of this equation numerically. 

\noindent We propose the following $L^2$-type test statistic,
\begin{align*}
\mathfrak{K} =& n \int_0^1 \left( \frac{1}{n} \sum_{j=1}^n \left( s^{X_j-1}-1+\left(\frac{X_j}{X_j+1}\right)^{\widehat{\nu}_n}\left(1-s^{X_j}\right)\right)\right)^2 \, ds, 
\end{align*}
which quantifies the squared deviation from the Stein identity integrated over the pgf parameter domain $s \in (0,1)$. A large value of $\mathfrak{K}$ indicates deviation from the hypothesized DPareto distribution and leads to the rejection of the null hypothesis.

\noindent Straightforward evaluation of the integrals lead to the integration free representation
\begin{eqnarray} \label{TS1}
\mathfrak{K}= n&+&\frac{1}{n} \sum_{i,j=1}^n\left( \frac{1}{X_{i}+X_{j}-1}-\left(\frac1{X_{i}}+\frac1{X_{j}}\right) \right. \\ \nonumber&&\left. +2\left(\frac{X_{i}}{X_{i}+1}\right)^{\widehat{\nu}_n+1}\left(\frac{X_{i}+1}{X_{j}(X_{j}+X_{i})}-1\right) \right.
\\ \nonumber && \left.+ \left(\frac{X_{i}}{X_{i}+1}\right)^{\widehat{\nu}_n+1}\left(\frac{X_{j}}{X_{j}+1}\right)^{\widehat{\nu}_n+1}\frac{X_{i}+X_{j}+2}{X_{i}+X_{j}+1}\right). 
\end{eqnarray}
which is suitable for implementations.


\section{Limiting property of the test statistics \texorpdfstring{$\mathfrak{K}$}{K}} \label{limit}
Let $X_{1}, \ldots, X_{n},\ldots$ be iid random variables distributed according to the DPareto-law with parameter $\nu_0>1$. Assume further that the consistent estimator $\widehat{\nu}_n$ admits a linear representation of the form
\begin{equation}\label{eq:linear-rep}
    \sqrt{n}\left(\widehat{\nu}_n-\nu_0\right)=\frac1{\sqrt{n}}\sum_{j=1}^n\ell(X_{j},\nu_0)+o_{\mathbb{P}}(1),
\end{equation}
where $\ell$ is a known function satisfying $\mathbb{E}(\ell(X_{1},\nu_0))=0$ and $\mathbb{E}(\ell(X_{1},\nu_0)^2)<\infty$. Note that for choosing $\widehat{\nu}_n$ to be the MLE we have by the Bahadur representation, see Corollary 10.16 in \cite{H:2024}, $\ell(x,\nu_0)=-I(\nu_0)^{-1}\left(\log(x)+\zeta'(\nu_0)/\zeta(\nu_0)\right).$ 

\noindent For $\nu>1$ define
\begin{equation*}\label{eq:def-g}
g_\nu(x,s)
=s^{x-1}-1+\left(\frac{x}{x+1}\right)^\nu(1-s^x), \quad x\in\mathbb{N},s\in(0,1),
\end{equation*}
and the symmetric kernel
\begin{eqnarray*}\label{eq:def-h}
h_\nu(x,y)&=&\int_0^1 g_\nu(x,s)\,g_\nu(y,s)\,ds\\
&=&\frac{1}{x+y-1}-\left(\frac1{x}+\frac1{y}\right)\nonumber \\ &&+\left(\frac{x}{x+1}\right)^{\nu+1}\left(\frac{x+1}{y(y+x)}-1\right)+\left(\frac{y}{y+1}\right)^{\nu+1}\left(\frac{y+1}{x(y+x)}-1\right) \nonumber \\ &&+\left(\frac{x}{x+1}\right)^{\nu+1}\left(\frac{y}{y+1}\right)^{\nu+1}\frac{x+y+2}{x+y+1}+1 ,\qquad x,y\in\mathbb{N}.
\end{eqnarray*}
The statistic $\mathfrak{K}$ given in \eqref{TS1} is a V-statistic of degree 2
\begin{equation*}\label{eq:K-Vstat-main}
\mathfrak{K}
= \frac{1}{n}\sum_{i,j=1}^n h_{\widehat{\nu}_n}(X_{i},X_{j})
= n\,V_n(h_{\widehat{\nu}_n}),
\qquad 
V_n(h)=\frac{1}{n^2}\sum_{i,j=1}^n h(X_{i},X_{j}),
\end{equation*}
To account for parameter estimation, we introduce the functions
\[
\psi_{\nu_0}(x,s)=g_{\nu_0}(x,s)+a(s)\,\ell(x,\nu_0), \quad x\in\mathbb{N},
\qquad 
a(s)=\mathbb{E}_{\nu_0}\!\left[\partial_\nu g_{\nu_0}(X,s)\right],\quad s\in(0,1),
\]
and the induced kernel
\[
k_{\nu_0}(x,y)=\int_0^1 \psi_{\nu_0}(x,s)\psi_{\nu_0}(y,s)\,ds,\quad x,y\in\mathbb{N}.
\]
Under the Stein identity in \eqref{eq:Stein_id}  we have $\mathbb{E}_{\nu_0}[g_{\nu_0}(X,s)]=0$ for all $s\in(0,1)$, such that the kernel $k$ is 
$1$-degenerate. Moreover, $\mathfrak{K}=nV_n(k)+o_{\mathbb{P}}(1)$ and $k\in L^2(P_{\nu_0}\times P_{\nu_0})$,
see Lemma \ref{lem:kernel-regularity-app} in Appendix~\ref{app:Vstat-proofs}. Here, $L^2(P_\nu)=\{f:\mathbb{N}\to\mathbb{R}: \sum_{k=1}^\infty (f(k))^2p_{\nu}(k)<\infty\}$ equipped with the inner product $\langle f,g\rangle=\sum_{k=1}^\infty f(k)g(k)p_{\nu}(k)$ denotes the space of square-integrable functions with respect to the distribution $P_\nu$.

\begin{thm}\label{thm:null-limit}
Let $(\lambda_m)_{m\ge 1}$ be the positive decreasing sequence of eigenvalues of the Hilbert--Schmidt operator
$T:L^2(P_{\nu_0})\to L^2(P_{\nu_0})$ defined by
\[
(Tf)(x):=\int_{\mathbb{N}} k_{\nu_0}(x,y)f(y)\,dP_{\nu_0}(y)=\sum_{y=1}^\infty k_{\nu_0}(x,y)f(y)p_{\nu_0}(y).
\]
Then, under the standing assumptions of this section,
\begin{equation*}\label{eq:limit-law}
\mathfrak{K}\ \stackrel{d}{\rightarrow}\ \sum_{m=1}^\infty \lambda_m Z_m^2,\quad\mbox{as}\;n\rightarrow\infty,
\end{equation*}
where $Z_1,Z_2,\dots$ are iid  $N(0,1)$ random variables.
\end{thm}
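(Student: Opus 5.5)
The plan is the Hilbert-space route: write $\mathfrak{K}$ as the squared norm of an empirical process in $L^2(0,1)$, linearise in the estimated parameter, and apply a CLT in that Hilbert space. Put $\mathcal{H}=L^2([0,1],ds)$ and
\[
W_n(s)=\frac{1}{\sqrt n}\sum_{j=1}^n g_{\widehat\nu_n}(X_j,s),
\]
so that $\mathfrak{K}=\|W_n\|_{\mathcal H}^2$ exactly.

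\textbf{Step 1 (linearisation).} Taylor-expand in $\nu$ around $\nu_0$:
\[
W_n(s)=\frac1{\sqrt n}\sum_j g_{\nu_0}(X_j,s)+\sqrt n(\widehat\nu_n-\nu_0)\,\frac1n\sum_j\partial_\nu g_{\tilde\nu}(X_j,s).
\]
Here $\partial_\nu g_\nu(x,s)=\left(\frac{x}{x+1}\right)^\nu\log\left(\frac{x}{x+1}\right)(1-s^x)$. Its absolute value is bounded by $\log(1+1/x)\le 1/x\le 1$, uniformly in $\nu$ and $s$, and it is continuous in $\nu$ uniformly in $x$ and $s$. Using consistency of $\widehat\nu_n$ and the law of large numbers in $\mathcal H$ (the summands are bounded), we get $\frac1n\sum_j\partial_\nu g_{\tilde\nu}(X_j,\cdot)\to a(\cdot)$ in $\mathcal H$ in probability. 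The linear representation \eqref{eq:linear-rep} gives $\sqrt n(\widehat\nu_n-\nu_0)=O_{\mathbb P}(1)$. Combining these,
\[
\|W_n-W_n^*\|_{\mathcal H}=o_{\mathbb P}(1),\qquad W_n^*(s)=\frac1{\sqrt n}\sum_j\psi_{\nu_0}(X_j,s).
\]
This is the statement $\mathfrak{K}=nV_n(k)+o_{\mathbb P}(1)$ recorded before the theorem, because $\|W_n^*\|^2=nV_n(k_{\nu_0})$.

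\textbf{Step 2 (Hilbert-space CLT).} The random elements $\psi_{\nu_0}(X_j,\cdot)\in\mathcal H$ are iid. They are centred, since $\mathbb E g_{\nu_0}(X,s)=0$ by Corollary~\ref{cor:char} and $\mathbb E\ell=0$. They have finite second moment: $|g_\nu|\le 3$, $a$ is bounded, and $\mathbb E\ell^2<\infty$. The CLT in separable Hilbert spaces therefore gives $W_n^*\stackrel{d}{\to} W$, a centred Gaussian element of $\mathcal H$ with covariance kernel $c(s,t)=\mathbb E[\psi_{\nu_0}(X,s)\psi_{\nu_0}(X,t)]$. By Slutsky and the continuous mapping theorem, $\mathfrak K\stackrel{d}{\to}\|W\|^2_{\mathcal H}$.

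\textbf{Step 3 (identifying the limit).} Expanding $W$ in the eigenbasis of its covariance operator $C$ on $\mathcal H$ gives $\|W\|^2=\sum_m\mu_mZ_m^2$ with $Z_m$ iid $N(0,1)$, where $\mu_m$ are the eigenvalues of $C$. It remains to show that the nonzero eigenvalues of $C$ coincide, with multiplicities, with those of $T$. Define $A:L^2(P_{\nu_0})\to\mathcal H$ by $(Af)(s)=\mathbb E[\psi_{\nu_0}(X,s)f(X)]$. Then $C=AA^*$ and $T=A^*A$, since $k_{\nu_0}(x,y)=\int_0^1\psi_{\nu_0}(x,s)\psi_{\nu_0}(y,s)\,ds$. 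Because $AA^*$ and $A^*A$ share their nonzero spectrum, the claim follows. Both operators are Hilbert--Schmidt, because $k_{\nu_0}\in L^2(P_{\nu_0}\times P_{\nu_0})$ by Lemma~\ref{lem:kernel-regularity-app}. Moreover, $\sum_m\lambda_m=\mathbb E k_{\nu_0}(X,X)<\infty$, so the series converges almost surely.

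\textbf{Main obstacle.} The delicate step is Step 1: controlling the remainder uniformly in $s$ in the $\mathcal H$-norm. This rests on the bounds $|\partial_\nu g_\nu|\le 1/x$ and $|\partial^2_\nu g_\nu|\le (\log(1+1/x))^2\le 1$ for all $\nu$ in a neighbourhood of $\nu_0$, which make the mean-value remainder $O_{\mathbb P}(|\widehat\nu_n-\nu_0|)\cdot O_{\mathbb P}(1)$.

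An alternative for Step 2 is the classical degenerate V-statistic theorem applied to $nV_n(k_{\nu_0})$. Its diagonal term $\frac1n\sum_j k(X_j,X_j)$ tends to $\mathbb Ek(X,X)=\sum\lambda_m$, which combines with the U-statistic limit $\sum\lambda_m(Z_m^2-1)$ to give the same result. I would use the Hilbert-space route because it avoids separate treatment of the diagonal.
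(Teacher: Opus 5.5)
Your proof is correct, and after Step 1 it takes a different route from the paper's. The paper also linearises $\sqrt n\,\bar g_n(\cdot;\widehat\nu_n)$ in $L^2(0,1)$; this is your Step 1, and you supply the uniform-in-$s$ bounds that the paper only asserts. The paper then reads off $\mathfrak{K}=nV_n(k)+o_{\mathbb P}(1)$ and invokes the degenerate V-statistic limit theorem of Leucht and Neumann (2013), after checking symmetry, positive semidefiniteness, $\mathbb{E}_{\nu_0}[h_{\nu_0}(X,X)]<\infty$ and $1$-degeneracy. You instead use the CLT for iid elements of $L^2(0,1)$, continuous mapping and the Karhunen--Lo\`eve expansion, which is the alternative the paper only sketches in its Remark.

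The paper's route lands directly on the operator $T$ on $L^2(P_{\nu_0})$ in which the theorem is phrased. Its cost is reliance on an external V-statistic theorem, inside which the diagonal term is absorbed. Your route uses only textbook Hilbert-space tools and handles the diagonal automatically. It also gives the stronger process-level convergence $W_n\to W$. That is convenient for a contiguous-alternatives analysis via Le Cam's third lemma. It also gives an equivalent eigenvalue problem for the kernel $c(s,t)$ on $[0,1]$, which may be easier to discretise numerically. The price is the extra step identifying the nonzero spectrum of $C=AA^*$ with that of $T=A^*A$. You handle this correctly, including $\sum_m\lambda_m=\mathbb{E}k_{\nu_0}(X,X)=\|A\|_{HS}^2$.

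One cosmetic point: the mean-value point $\tilde\nu$ in Step 1 depends on both $j$ and $s$. You can avoid this by writing the remainder in integral form, or by using the second-order bound $|g_{\widehat\nu_n}-g_{\nu_0}-(\widehat\nu_n-\nu_0)\partial_\nu g_{\nu_0}|\le\tfrac12(\widehat\nu_n-\nu_0)^2$, valid once $\widehat\nu_n>0$. Either version sidesteps measurability questions and makes the $L^2(0,1)$ remainder $O_{\mathbb P}(n^{-1/2})$ immediately.
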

\noindent The proof of Theorem~\ref{thm:null-limit} relies on the results of \citet{LN:2013} and is given in Appendix~\ref{app:Vstat-proofs}. We now prove that the test which rejects the hypothesis $H_0$ for large values of $\mathfrak{K}$ is consistent against general alternatives. Hereafter, we consider an iid sequence $(X_n)_{n\in \mathbb{N}}$ of copies of $X$, where  $X$ is a non-degenerate positive random variable taking values in $\mathbb{N}$. Moreover, we assume that there is $\nu_0>1$ such that
\begin{align*}
\widehat{\nu}_{n} \stackrel{\text{a.s.}}{\longrightarrow} \nu_0, \qquad \text{as} \quad n \rightarrow \infty,
\label{consistency_convergence_estimators}
\end{align*}
where \(\widehat{\nu}_{n}\) is the consistent estimator as before. The following result is a direct consequence of a Taylor expansion and Fatou's lemma.
\begin{thm}\label{thm:consistency}
Under the stated conditions, we have
\begin{align*}
\liminf_{n \rightarrow \infty} \frac{\mathfrak{K}}{n} \geq \Delta_{\nu_0} \qquad \mathbb{P} \textnormal{-a.s.},
\end{align*}
where $\Delta_{\nu_0}$ is the departure measure defined in \eqref{eq:DM}.
\end{thm}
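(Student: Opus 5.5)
Write $\mathfrak{K}/n=\int_0^1 Z_n(s)^2\,ds$ with
\[
Z_n(s)=\frac1n\sum_{j=1}^n g_{\widehat{\nu}_n}(X_j,s).
\]
The plan is to show that $Z_n(s)\to z(s):=\mathbb{E}\,g_{\nu_0}(X,s)$ almost surely for each fixed $s\in(0,1)$. Fatou's lemma, applied pointwise in $s$ to the nonnegative integrands $Z_n(s)^2$, then gives
\[
\liminf_{n\to\infty}\frac{\mathfrak{K}}{n}\ \ge\ \int_0^1 \liminf_{n\to\infty} Z_n(s)^2\,ds=\int_0^1 z(s)^2\,ds=\Delta_{\nu_0}.
\]

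\textbf{Step 1: splitting off the estimator.} Fix $s$ and split
\[
Z_n(s)=\frac1n\sum_{j=1}^n g_{\nu_0}(X_j,s)+\frac1n\sum_{j=1}^n\bigl(g_{\widehat{\nu}_n}(X_j,s)-g_{\nu_0}(X_j,s)\bigr).
\]
For $x\in\mathbb{N}$ and $s\in(0,1)$ we have $|g_\nu(x,s)|\le 3$, since each of $s^{x-1}$, $(x/(x+1))^\nu$ and $1-s^x$ lies in $[0,1]$. So the strong law of large numbers applies to the first term and gives $\frac1n\sum_j g_{\nu_0}(X_j,s)\to z(s)$ almost surely. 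No moment condition on $X$ is needed, which matters because the alternatives may be heavy tailed.

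\textbf{Step 2: the estimation term (the only real work).} By a Taylor (mean value) expansion in $\nu$,
\[
g_{\widehat{\nu}_n}(x,s)-g_{\nu_0}(x,s)=(\widehat{\nu}_n-\nu_0)\,\Bigl(\tfrac{x}{x+1}\Bigr)^{\tilde\nu}\log\Bigl(\tfrac{x}{x+1}\Bigr)(1-s^x)
\]
for some $\tilde\nu$ between $\widehat{\nu}_n$ and $\nu_0$. The derivative factor is bounded uniformly in $x\ge1$, $s\in(0,1)$ and $\tilde\nu>1$ by $\log 2$, because $|\log(x/(x+1))|\le\log 2$ and the other two factors lie in $[0,1]$. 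Hence the second term is at most $\log 2\,|\widehat{\nu}_n-\nu_0|$ in absolute value, and this tends to $0$ almost surely by the assumed strong consistency. The bound is uniform in $s$, so the convergence of the estimation term is in fact uniform.

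\textbf{Step 3: conclusion.} The two terms together give $Z_n(s)\to z(s)$ almost surely for every fixed $s$. A joint null set over all $s$ is not needed when using Fatou. Instead, apply Fubini to see that for almost every $\omega$ the convergence holds for Lebesgue-almost every $s$, which suffices inside the integral. Alternatively, the uniform bound $|\partial_s g_\nu(x,s)|$ is not available near $s=1$, so I would not aim for uniform convergence in $s$; the Fubini argument is enough.

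Since $|Z_n(s)|\le 3$, dominated convergence in fact yields the limit $\Delta_{\nu_0}$ itself, which is stronger than the stated inequality. The main obstacle is only the measurability/null-set bookkeeping in Step 3, since the uniform derivative bound in Step 2 is elementary.
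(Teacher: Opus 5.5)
Your proof is correct and follows the route the paper indicates; the paper gives no written proof, only the remark that the result is a direct consequence of a Taylor expansion and Fatou's lemma. Your bounded strong law for the $g_{\nu_0}$ term, the mean-value bound $\log 2\,|\widehat{\nu}_n-\nu_0|$ on the estimation term, and the Fubini handling of the $s$-dependent null sets fill in that sketch properly, and your closing remark that dominated convergence in fact gives $\mathfrak{K}/n\to\Delta_{\nu_0}$ almost surely is also right.
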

Note that Theorem \ref{thm:consistency} implies global consistency of the testing procedure since $\Delta_{\nu_0}>0$ whenever the underlying law is not a member of the discrete Pareto family $\mathcal{P}$, which is directly implied by the characterization, see Corollary \ref{cor:char}.

\noindent Since the eigenvalues of the covariance operator in the limit null distribution of $\mathfrak{K}$ in Theorem \ref{thm:null-limit} depend on the unknown parameter $\nu_0>1$ of the underlying 
DPareto distribution, we propose a parametric bootstrap procedure to obtain critical values. 
For a sample $X_1,\ldots,X_n$ satisfying the assumptions above, we compute the value of the consistent estimator $\widehat{\nu}_n=\widehat{\nu}_n(X_1,\ldots,X_n)$.

\noindent We then generate a bootstrap sample of size $n$, say $X_1^*,\ldots,X_n^*$, following the 
$\mathrm{DPareto}(\widehat{\nu}_n)$ distribution, estimate the parameter $\nu$ from $X_1^*,\ldots,X_n^*$ 
(denote it $\widehat{\nu}_n^*$), and calculate the test statistic $\mathfrak{K}^*$. By repeating this procedure $b$ times, we obtain $\mathfrak{K}_{n,1}^*,\ldots,\mathfrak{K}_{n,b}^*$ and compute 
the empirical distribution function
\[
H_{n,b}^*(t)=\frac{1}{b}\sum_{i=1}^b\mathbb{I}(\mathfrak{K}_{n,i}^*\le t),\qquad t\ge 0,
\]
where $\mathbb{I}$ stands for the indicator function. Given the nominal level $\alpha\in[0,1]$, we use the empirical $(1-\alpha)$-quantile:
\[
c_{n,b}^*(\alpha)=H_{n,b}^{*-1}(1-\alpha)=\begin{cases}
\mathfrak{K}_{b(1-\alpha):b}^*, & b(1-\alpha)\in\mathbb{N},\\[0.2em]
\mathfrak{K}_{\lfloor b(1-\alpha)\rfloor+1:b}^*, & \text{otherwise,}
\end{cases}
\]
where $\mathfrak{K}_{1:b}^*,\ldots,\mathfrak{K}_{b:b}^*$ are the order statistics. We reject $H_0$ if $\mathfrak{K}>c_{n,b}^*(\alpha)$.

\noindent Denote the distribution function of $\mathfrak{K}$ under $\mathrm{DPareto}(\nu)$ by 
$H_{n,\nu}(t)=\mathbb{P}_\nu(\mathfrak{K}\le t)$, and the limit distribution by 
$H_\nu(t)=\mathbb{P}( \sum_{m=1}^\infty \lambda_m Z_m^2\le t)$, where $ \sum_{m=1}^\infty \lambda_m Z_m^2$ is the random variable from the limit null distribution.  The function $H_\nu$ is continuous and strictly monotone. By consistency of $\widehat{\nu}_n$ 
and continuity of $H_\nu$, for each $t\ge 0$,
\[
\lim_{n\to\infty}H_{n,\widehat{\nu}_n}(t)=H_{\nu_0}(t)\quad\mathbb{P}\text{-a.s.}
\]

\noindent By a triangular version of the bootstrap argument  \citep[Theorem 3.6]{H:1996}, we have
\[
\sup_{t\ge 0}\left|H_{n,b}^*(t)-H_{n,\widehat{\nu}_n}(t)\right|\xrightarrow{\mathbb{P}}0
\quad\text{as }b,n\to\infty.
\]
Thus, $c_{n,b}^*(\alpha)\xrightarrow{\mathbb{P}}H_{n,\widehat{\nu}_n}^{-1}(1-\alpha)$ as $b\to\infty$. If $X_1,\ldots,X_n$ are iid\ from $\mathrm{DPareto}(\nu_0)$, the continuity of $H_{\nu_0}$ yields
\[
\lim_{n\to\infty}\lim_{b\to\infty}\mathbb{P}(\mathfrak{K}>c_{n,b}^*(\alpha))=\alpha.
\] 
If $X_1,\ldots,X_n$ do not follow a DPareto distribution, then by Corollary \ref{cor:char} and Equation \eqref{eq:DM}, 
$\Delta_{\nu_0}>0$, so
\[
\lim_{n\to\infty}\lim_{b\to\infty}\mathbb{P}(\mathfrak{K}>c_{n,b}^*(\alpha))=1.
\]
Thus the test is consistent against any fixed alternative distribution satisfying 
the stated assumptions.

\begin{rmk}
    The derived results could also be obtained by taking advantage of the Hilbert-space structure of the underlying $L^2$-space and deriving the weak limit of the empirical Stein process 
    \begin{equation*}
    Z_n(s)=\frac1{\sqrt{n}}\sum_{j=1}^ng_{\widehat{\nu}_n}(X_j,s),\quad s\in(0,1).
    \end{equation*}
    This approach would follow the line of Section 2 in \cite{EH:2025} by considering the fact $\mathfrak{K}=\|Z_n(X_j,\cdot)\|_{L^2}^2$ and that the limit distribution in Theorem \ref{thm:null-limit} is the Karhunen-Loève expansion of the resulting limit centred Gaussian process.
\end{rmk}

\section{Simulation Study} \label{MCstudy}
The properties discussed in the previous sections are asymptotic, meaning that they describe the behavior of the proposed test when the sample size is large. However, for small and moderate sample sizes, we perform a Monte Carlo simulation with $MC=1000$ replications. In each replication, we estimate $\hat{\nu}$ and generate $b=500$ bootstrap samples of size $n$ from DPareto$(\hat{\nu})$ and obtain the bootstrap critical point. Then the rejection probabilities are estimated by obtaining the proportion of rejection of the null hypothesis in 1000 Monte Carlo replications. For comparison,
we include the test $Z_{n,a}$ proposed by \citet{meintanis2009unified} and three recently used gof tests viz. $T_{n,\beta}$, $C_n^e$ and $S_{\text{BEN}}$. The brief details of these tests are summarized as follows:
\begin{enumerate}
    \item[(i)] The Test $Z_{n,a}$  proposed by \citet{meintanis2009unified}, involves the difference between the empirical inverse Mellin transform and its theoretical counterpart, is given as
\begin{equation*} 
Z_{n,a} = \zeta^2(\nu) \frac{1}{n} \sum_{j,k=1}^n I_w^{(0)} (x_jx_k) + n I_w^{(2)} (1) - 2 \zeta(\nu) \sum_{j=1}^n I_w^{(1)} (x_j), 	
\end{equation*}
where	
\begin{equation} \nonumber
I_w^{(m)} (x)= \int_0^\infty \zeta^m(\nu+t) \frac{1}{x^t} w(t) dt, \quad m=0,1,2. 	 
\end{equation} 
By setting $w(t) = \exp(-at)$, they considered \[I_a^{(0)} (x) := (a+ \log(x))^{-1} , \]
\[ I_a^{(1)} (x) := I_a^{(1)} (x, \nu) = \sum_{s=1}^\infty \frac{1}{s^\nu (a+\log(sx))}, \] while the value of $I_a^{(2)} (1)$, which is independent of the observations obtained by numerical integration.
\item[(ii)] The Test $T_{n,\beta}$ proposed by \cite{EH:2025}, is based on Steins characterization of Zeta law using the generator approach, given as
\begin{align*}
T_{n,\beta} = \frac{2}{n} \sum_{j,k=1}^n &\bigg(\frac{\mathbb{I} (X_{n,j} = X_{n,k} = 1)}{(3+\beta) \, (4+\beta) \, (5+\beta)} \\ 
   &- 2 \bigg( \left(\frac{X_{n,j}}{X_{n,j}-1}\right)^{\widehat{\nu}_n}
 - \frac{X_{n,j}+1}{\beta + X_{n,j} + 4}   \bigg) \, B(X_{n,j}+1, 3+\beta) \, \mathbb{I} (X_{n,j} \geq 2, X_{n,k} = 1) \\
   &+ \bigg ( B(X_{jk}^+ -1, 3+\beta) \, \left(\frac{X_{n,j}}{X_{n,j}-1}\right)^{\widehat{\nu}_n}
 \left(\frac{X_{n,k}}{X_{n,k}-1}\right)^{\widehat{\nu}_n}
 \\ 
   & -  B(X_{jk}^+, 3+\beta) \, \left( \left(\frac{X_{n,j}}{X_{n,j}-1}\right)^{\widehat{\nu}_n}
 +\left(\frac{X_{n,k}}{X_{n,k}-1}\right)^{\widehat{\nu}_n}
 \right) \\ & + B(X_{jk}^+ +1, 3+\beta) \bigg) \, \mathbb{I} (X_{n,j}, X_{n,k} \geq 2)\bigg),
\end{align*}
where $B(\cdot,\cdot)$ denotes the Beta function and $X_{jk}^+ = X_{n,j} + X_{n,k}$.

\item[(iii)] Test $C_n^e$ by \citet{H:1996} for parametric families of discrete distributions is based on the difference between the empirical distribution function and the estimated distribution function.
In the present setting, we use its specialization to the DPareto distribution with unknown parameter $\nu$. The corresponding test statistic is defined as
\[
C_n^e
= n \sum_{k=1}^\infty
\big( \widehat F_n(k) - F(k;\widehat{\nu}_n) \big)^2
\big( \widehat F_n(k) - \widehat F_n(k-1) \big),
\]
where
\[
\widehat F_n(k) - \widehat F_n(k-1)
= \frac{1}{n} \sum_{j=1}^n \mathbb{I} (X_j = k), \quad k \geq 1,
\]
denotes the empirical pmf.

\item[(iv)] Finally, we consider the test proposed by \citet{betsch2022characterizations}, which is based on the discrepancy measure between the empirical Steins-type pmf identity and its empirical counterpart. Its adaptation to the DPareto distribution yields the test statistic
\[
S_{\mathrm{BEN}} (\widehat{\nu}_n)
= \sum_{k=1}^{M_n}
\big( e_n(k;\widehat{\nu}_n) - \rho_n(k) \big)^2,
\quad
M_n = \max_{1 \leq j \leq n} X_j ,
\]
where
\[
e_n(k;\widehat{\nu}_n)
= \frac{1}{n} \sum_{j=1}^n
\bigg( 1- \left( \frac{X_j}{X_j+1} \right)^{\widehat{\nu}_n} \bigg) \, 
\mathbb{I} (X_j \geq k),
\]
and
\[
\rho_n(k)
= \frac{1}{n} \sum_{j=1}^n \mathbb{I} (X_j = k).
\]
\end{enumerate}
\noindent To compare the power of the proposed test with existing gof procedures, we consider a range of alternative distributions supported on 
$\mathbb{N}$. These alternatives are constructed by perturbing the DPareto distribution in two distinct ways: (i) through the sum i.e. $X_1+X_2$ and (ii) through the maximum 
$\max(X_1,X_2)$. Here, $X_1$ follows a DPareto distribution with shape parameter $\nu$ and $X_2$ is an independent random variable supported on 
$\mathbb{N}$. Since the resulting random variables do not follow a DPareto distribution, they serve as meaningful alternatives for power assessment. Specifically, $X_2$ is assumed to follow a discrete uniform distribution on $\{0,1,2,\cdots,k\}$, with $k=2,4,$ and 5 denoted by $DU(2)$, $DU(4)$ and $DU(5)$ respectively. It is noted that as the value of $k$ increases, the probability $p_{X_2}(0)$ decreases, resulting in deviation from DPareto distribution. Further, we consider the shape parameter $\nu$ =1.5 (addressing the infinite first moment case) and $\nu$= 2, and 3 (for the finite moment case). The empirical size and power of the competing tests for sample sizes $n$=10 and $n$=20 are reported in Tables \ref{PS1} and \ref{PS2}.\\
\noindent It is evident that all the tests under consideration adequately preserve the nominal significance level, with empirical sizes remaining close to the target level of 0.05. Furthermore, for alternative distributions such as the Poisson, binomial, negative binomial, and Bell distributions, all procedures attain power values approaching unity even for a small sample size of $n=10$. Consequently, the results for these alternatives are omitted, as they do not provide meaningful discrimination among the competing tests.  
For alternatives constructed via additive perturbations of the form $X_1+X_2$, the proposed test $\mathfrak{K}$ demonstrates uniformly high power across all combinations of $\nu$ and for $k$. In particular, for moderate to large perturbations ($k$=4,5), $\mathfrak{K}$-test attains power values exceeding 0.90 in almost all cases, often approaching unity. While the $Z_{n,a}$ tests also show competitive performance —especially for smaller values of $a$ — their power declines noticeably as $a$ increases. In contrast, the $T_{n,a}$ tests, along with 
$S_{BEN}$ and $C_n^e$ exhibit substantially lower power under these additive alternatives, particularly when the discrete uniform component has a larger support. These findings suggest that $\mathfrak{K}$ test is especially sensitive to departures from the DPareto model induced by additive contamination.\\
\noindent A different pattern emerges for alternatives based on the maximum structure $\max (X_1,X_2)$. In this case, the $T_{n,a}$ tests exhibit the strongest power performance, particularly for smaller values of $k$ and larger $\nu$, where power often exceeds 0.90. The proposed  $\mathfrak{K}$ test and the 
$Z_{n,a}$ tests also show increasing power as $\nu$ and $k$ grow, but they are generally outperformed by the $T_{n,a}$ family for these max-type alternatives. The tests $S_{BEN}$ and $C_n^e$ perform competitively for larger $k$, occasionally matching or slightly exceeding $\mathfrak{K}$, but their power is less stable across the full range of alternatives.
Overall, the simulation study reveals that no single test uniformly dominates across all types of alternatives; however, the proposed  $\mathfrak{K}$ test exhibits superior and robust power against a broad class of additive perturbations, which are practically relevant in modeling deviations from DPareto behavior. When combined with its correct size and competitive performance under max-type alternatives, these results establish 
$\mathfrak{K}$-test as a strong and reliable addition to the existing gof tests for the DPareto distribution.

\begin{landscape} 
\begin{table}[t]
\centering
\caption{Empirical size and power comparison for sample size $n=10$ (in percentage).} \label{PS1} 
\scalebox{0.9}{
\begin{tabular}{l|rrrHrrrrrrrrrr} 
\hline 
Distributions      & {$\mathfrak{K}$} & ${Z}_{10,0.5}$ & ${Z}_{10,1.0}$ & ${Z}_{10,1.5}$ & ${Z}_{10,2}$ & $T_{10,0}$ & $T_{10,1}$ & $T_{10,2}$ & $T_{10,3}$ & $T_{10,4}$ & $T_{10,5}$ & $S_{BEN}$ & $C_n^e$ \\ \hline 
$X_1(1.5)$            & 6.3              & 5.1     & 4.6     & 4.2     & 3.9   & 6.4      & 6.5      & 6.5      & 6.3      & 6.3      & 6.3      & 5.4    & 5.2                     \\
$X_1(2)$              & 5.1              & 4.7     & 4.4     & 4.5     & 4.7   & 5.7      & 5.8      & 5.8      & 5.8      & 5.8      & 5.8      & 4.3    & 5.2                     \\
$X_1(3)$              & 5.0              & 4.1     & 4.1     & 4.0     & 4.2   & 3.4      & 3.4      & 3.4      & 3.4      & 3.4      & 3.4      & 4.4    & 5.0                     \\
$X_1(1.5)+DU(2)$      & \textbf{52.9}             & 40.3    & 24.4    & 8.7     & 1.3   & 20.5     & 19.3     & 18.2     & 17.8     & 17.4     & 17.4     & 17.4   & 17.4                    \\
$X_1(2)+DU(2)$      & \textbf{77.3}             & 61.7    & 51.1    & 33.1    & 15.4  & 38.9     & 37.3     & 35.2     & 33.8     & 33.1     & 33.1     & 33.1   & 33.1                    \\
$X_1(2.5)+DU(2)$      & \textbf{83.5}             & 71.3    & 63.9    & 47.0    & 23.2  & 48.9     & 46.9     & 45.5     & 44.1     & 42.9     & 42.9     & 42.9   & 42.9                    \\
$X_1(3)+DU(2)$       & \textbf{85.7}             & 69.8    & 64.2    & 49.6    & 27.8  & 53.6     & 52.2     & 50.8     & 48.7     & 47.8     & 47.8     & 47.8   & 47.8                    \\
$X_1(1.5)+DU(4)$      & \textbf{64.9}             & 63.5    & 46.6    & 17.7    & 2.9   & 8.2      & 7.4      & 6.9      & 6.9      & 6.9      & 6.9      & 6.9    & 6.9                     \\
$X_1(2)+DU(4)$       & \textbf{85.3}             & 75.5    & 66.9    & 46.9    & 13.8  & 19.1     & 17.8     & 17.0     & 16.4     & 16.2     & 15.6     & 77.1   & 73.4                    \\
$X_1(2.5)+DU(4)$       & \textbf{88.7}             & 80.9    & 73.6    & 55.5    & 19.8  & 26.5     & 24.4     & 23.1     & 22.0     & 21.8     & 21.3     & 77.3   & 81.4                    \\
$X_1(3)+DU(4)$         & \textbf{89.3}             & 79.7    & 74.7    & 53.4    & 23.0  & 32.0     & 29.1     & 27.4     & 26.2     & 25.4     & 24.9     & 78.0   & 83.9                    \\
$X_1(1.5)+DU(5)$       & 65.5             & 68.0    & 51.7    & 22.8    & 2.5   & 3.5      & 3.7      & 3.7      & 3.7      & 3.7      & 3.5      & \textbf{73.6}   & 47.4                    \\
$X_1(2)+DU(5)$      & \textbf{85.5}             & 80.8    & 72.3    & 48.4    & 10.3  & 11.3     & 10.4     & 10.0     & 9.5      & 9.5      & 9.5      & 81.2   & 79.3                    \\
$X_1(2.5)+DU(5)$      & \textbf{88.5}             & 84.3    & 77.1    & 58.9    & 19.4  & 18.1     & 17.0     & 16.1     & 15.5     & 14.6     & 14.5     & 82.2   & 86.3                    \\
$X_1(3)+DU(5)$        & \textbf{89.1}             & 83.1    & 75.2    & 57.6    & 19.8  & 23.6     & 22.1     & 20.5     & 19.6     & 19.2     & 19.0     & 82.8   & 88.0                    \\
$\max(X_1(1.5),DU(2))$ & 18.0             & 11.1    & 4.8     & 1.9     & 0.5   & 23.6     & 24.0     & 24.5     & 25.0     & 25.3     & \textbf{25.4}     & 22.3   & 5.8                     \\
$\max(X_1(2),DU(2))$ & 34.4             & 23.5    & 17.2    & 11.1    & 4.3   & \textbf{55.3}     & 55.2     & 55.0     & 55.0     & 55.0     & 55.2     & 22.0   & 22.7                    \\
$\max(X_1(2.5),DU(2))$ & 41.2             & 32.0    & 24.8    & 14.4    & 7.0   & 73.7     & 74.0     & 74.5     & 74.4     & 74.5     & \textbf{74.9}     & 19.9   & 32.6                    \\
$\max(X_1(3),DU(2))$   & 39.4             & 27.4    & 20.9    & 11.0    & 5.0   & 79.6     & 79.7     & 79.7     & \textbf{79.8}     & \textbf{79.8}     & \textbf{79.8}     & 17.8   & 31.0                    \\
$\max(X_1(1.5),DU(4))$ & 48.6             & 35.9    & 22.7    & 8.5     & 1.6   & 12.6     & 11.8     & 11.6     & 11.5     & 11.1     & 11.2     & \textbf{50.3}   & 21.4                    \\
$\max(X_1(2),DU(4))$   & \textbf{66.9}             & 50.5    & 42.7    & 26.6    & 8.9   & 35.2     & 33.9     & 32.6     & 32.1     & 31.4     & 30.8     & 53.2   & 52.0                    \\
$\max(X_1(2.5),DU(4))$ & \textbf{70.6}             & 52.5    & 44.3    & 29.1    & 12.3  & 42.5     & 40.4     & 39.0     & 38.3     & 37.8     & 37.6     & 52.1   & 58.3                    \\
$\max(X_1(3),DU(4))$   & \textbf{66.4}             & 49.6    & 42.6    & 27.1    & 9.1   & 41.4     & 40.1     & 39.2     & 38.7     & 38.5     & 37.7     & 47.2   & 57.4                    \\
$\max(X_1(1.5),DU(5))$ & \textbf{54.3}             & 44.5    & 31.0    & 10.2    & 1.7   & 9.9      & 8.9      & 8.5      & 8.3      & 8.1      & 8.1      & 8.1    & 8.1                     \\
$\max(X_1(2),DU(5))$ & \textbf{73.1}             & 61.4    & 49.5    & 29.9    & 10.0  & 16.9     & 15.4     & 14.8     & 14.0     & 13.5     & 13.5     & 13.5   & 13.5                    \\
$\max(X_1(2.5),DU(5))$ & \textbf{73.0}             & 57.7    & 50.5    & 32.5    & 11.7  & 17.5     & 16.3     & 15.5     & 15.0     & 14.7     & 14.7     & 14.7   & 14.7                    \\
$\max(X_1(3),DU(5))$   & \textbf{68.7}             & 53.3    & 47.4    & 28.0    & 9.3   & 18.0     & 16.6     & 16.2     & 16.1     & 15.9     & 15.9     & 15.9   & 15.9  \\
\hline 
\end{tabular} }
\end{table}

\begin{table}[t]
\centering
\caption{Empirical size and power comparison for sample size $n$=20 (in percentage).} \label{PS2} 
\scalebox{0.9}{
\begin{tabular}{l|rrrrrrrrrrrr}
\hline 
Distributions      & {$\mathfrak{K}$} & ${Z}_{10,0.5}$ & ${Z}_{10,1.0}$  & ${Z}_{10,2}$ & $T_{10,0}$ & $T_{10,1}$ & $T_{10,2}$ & $T_{10,3}$ & $T_{10,4}$ & $T_{10,5}$ & $S_{BEN}$ & $C_n^e$ \\ \hline 
$X_1(1.5)$            & 6.0              & 5.9     & 6.7     & 5.1   & 5.3      & 5.3      & 5.1      & 5.2      & 5.2      & 5.3      & 5.4    & 5.7                     \\
$X_1(2)$              & 5.5              & 4.9     & 4.4     & 4.9   & 4.3      & 4.4      & 4.4      & 4.3      & 4.3      & 4.4      & 4.6    & 5.1                     \\
$X_1(3)$ & 6.0  & 6.1     & 5.9     & 5.7   & 5.1      & 5.1      & 5.1      & 5.0      & 4.7      & 4.7      & 5.2    & 5.7  \\
$X_1(1.5)+DU(2)$       & \textbf{81.8}             & 74.9    & 61.4    & 13.4  & 43.9     & 41.6     & 39.8     & 37.7     & 36.7     & 36.7     & 36.7   & 36.7                    \\
$X_1(2)+DU(2)$       & \textbf{96.1}             & 93.4    & 90.2    & 67.3  & 70.9     & 67.9     & 65.5     & 63.0     & 61.1     & 61.1     & 61.1   & 61.1                    \\
$X_1(2.5)+DU(2)$       & \textbf{97.7}             & 96.3    & 95.5    & 88.2  & 79.7     & 77.7     & 75.9     & 75.1     & 73.9     & 73.9     & 73.9   & 73.9                    \\
$X_1(3)+DU(2)$         & \textbf{98.6}  & 97.2    & 96.7    & 93.5  & 83.9     & 81.5     & 80.0     & 77.6     & 76.7     & 76.7     & 76.7   & 76.7                    \\
$X_1(1.5)+DU(4)$       & 91.2     & \textbf{92.2}    & 85.9    & 27.9  & 15.9     & 13.4     & 12.1     & 11.0     & 10.5     & 10.5     & 10.5   & 10.5                    \\
$X_1(2)+DU(4)$      & \textbf{98.4}             & 98.1    & 96.8    & 80.1  & 36.1     & 32.0     & 29.1     & 27.3     & 26.0     & 26.0     & 26.0   & 26.0                    \\
$X_1(2.5)+DU(4)$      & \textbf{99.2}             & 99.1    & 98.5    & 91.9  & 43.6     & 38.9     & 36.6     & 34.1     & 32.0     & 32.0     & 32.0   & 32.0                    \\
$X_1(3)+DU(4)$        & \textbf{99.4}             & 98.8    & 98.7    & 94.7  & 48.6     & 44.0     & 40.0     & 38.0     & 36.2     & 36.2     & 36.2   & 36.2                    \\
$X_1(1.5)+DU(5)$      & 92.8             & \textbf{94.7}    & 90.5    & 40.2  & 8.1      & 6.9      & 6.1      & 5.4      & 5.2      & 5.2      & 5.2    & 5.2                     \\
$X_1(2)+DU(5)$      & \textbf{98.6}             & \textbf{98.6}    & 98.4    & 84.4  & 24.9     & 21.9     & 18.7     & 17.4     & 16.1     & 16.1     & 16.1   & 16.1                    \\
$X_1(2.5)+DU(5)$      & \textbf{99.4}             & 99.3    & 99.2    & 93.8  & 32.1     & 27.7     & 25.1     & 23.2     & 22.3     & 22.3     & 22.3   & 22.3                    \\
$X_1(3)+DU(5)$        & \textbf{99.4}            & 98.9    & 98.9    & 95.4  & 35.8     & 32.1     & 30.0     & 28.0     & 25.9     & 25.9     & 25.9   & 25.9                    \\
$\max(X_1(1.5),DU(2))$ & 31.9             & 24.2    & 14.1    & 1.7   & 42.6     & 42.4     & 42.7     & 43.0     & 43.1     & \textbf{43.2}     & 39.2   & 13.9                    \\
$\max(X_1(2),DU(2))$ & 57.8             & 47.6    & 40.6    & 23.3  & 65.1     & \textbf{65.3}     & 65.2     & \textbf{65.3}    & \textbf{65.3}    & \textbf{65.3}    & \textbf{65.3}   & \textbf{65.3}                   \\
$\max(X_1(2.5),DU(2))$ & 70.8             & 62.1    & 56.3    & 42.8  & 91.2     & 91.1     & 91.1     & \textbf{91.4}     & 91.3     & 91.3     & 50.5   & 62.6                    \\
$\max(X_1(3),DU(2))$   & 75.4             & 68.3    & 61.7    & 48.0  & \textbf{96.0}     & 95.9     & \textbf{96.0}     & \textbf{96.0}     & \textbf{96.0}    & \textbf{96.0}     & 47.8   & 63.5                    \\
$\max(X_1(1.5),DU(4))$ & 74.9             & 66.3    & 49.1    & 9.4   & 28.6     & 26.2     & 25.0     & 24.4     & 23.5     & 22.9     & \textbf{76.6 }  & 53.4                    \\
$\max(X_1(2),DU(4))$   & \textbf{89.7}             & 84.2    & 79.9    & 55.6  & 60.2     & 57.0     & 54.6     & 53.3     & 51.5     & 49.9     & 84.8   & 87.3                    \\
$\max(X_1(2.5),DU(4))$ & \textbf{93.7}    & 88.9    & 87.8    & 73.4  & 66.8     & 64.0     & 61.8     & 59.1     & 57.7     & 56.5     & 86.4   & 92.4                    \\
$\max(X_1(3),DU(4))$   & \textbf{93.7}             & 86.7    & 86.5    & 76.2  & 65.7     & 62.6     & 60.0     & 57.5     & 56.0     & 55.1     & 83.1   & 90.9                    \\
$\max(X_1(1.5),DU(5))$ & 82.7             & 78.8    & 67.1    & 14.6  & 19.7     & 18.0     & 16.9     & 16.3     & 15.2     & 14.8     & \textbf{84.7}   & 70.8                    \\
$\max(X_1(2),DU(5))$ & \textbf{94.4}             & 91.7    & 89.6    & 67.5  & 45.0     & 41.8     & 40.2     & 38.1     & 35.7     & 34.5     & 91.7   & \textbf{94.4}                   \\
$\max(X_1(2.5),DU(5))$ & \textbf{95.7}             & 92.8    & 92.5    & 80.5  & 51.2     & 48.0     & 44.3     & 42.5     & 41.6     & 40.6     & 90.9   & 95.6                    \\
$\max(X_1(3),DU(5))$   & \textbf{94.3}             & 90.1    & 89.8    & 79.7  & 50.0     & 45.7     & 43.9     & 42.2     & 40.9     & 40.0     & 89.1   & 94.0     \\ \hline                
\end{tabular} }
\end{table}
\end{landscape}

\section{Data Analysis} \label{data}

In this section, we illustrate the application of the proposed gof test for the DPareto distribution using the real dataset presented by \citet{van2003point}, which reports the number of times illegal immigrants were apprehended by the police in four major Dutch cities—Amsterdam, Rotterdam, The Hague, and Utrecht—during the year 1995. Based on the outcome of the apprehension process, individuals were categorized as: (i) effectively expelled, (ii) not effectively expelled, and (iii) other/missing. Our analysis focuses on the first two groups: those who were effectively expelled and those who were not effectively expelled.

\medskip

\noindent \textbf{Effectively Expelled:}
This group consists of 2036 individuals who were transported back to their countries by airplane, boat, or car. The observed frequencies ($n_i$) of individuals apprehended exactly $i$ times are: $n_1 = 1999$, $n_2 = 33$, $n_3 = 2$, $n_4 = 1$, $n_5 = 1$.

\noindent The data exhibit a strongly heavy-tailed pattern, with a high concentration of individuals apprehended only once and rapidly declining frequencies thereafter. To assess whether these frequencies are consistent with a DPareto distribution, we construct a log-log plot of the frequency against the number of apprehensions. As shown in Figure~\ref{PII}, the plot demonstrates a linear, negatively sloped trend—an expected characteristic under a power-law model.

\noindent Theoretically, the slope of the log-log plot is expected to be approximately $-\nu$, where $\nu$ is the exponent of the DPareto distribution. For this dataset, the estimated slope is $-5.02$, and the MLE yields $\hat{\nu} = 5.89$, indicating reasonable agreement between the graphical and inferential estimators.

\noindent To formally evaluate the fit, we apply the proposed test statistic as described in the previous section. The corresponding bootstrap $p$-value (see Table~\ref{SII}) exceeds the 5\% significance level, indicating no significant departure from the null hypothesis. Thus, both visual and statistical evidence support the conclusion that the data for the "Effectively Expelled" group can be reasonably modeled by a DPareto distribution.

\medskip

\noindent \textbf{Not Effectively Expelled:}
This subgroup includes 1880 individuals who were not formally removed from the Netherlands but were released under informal or ineffective conditions—such as being "sent away" or "left with destination unknown"—and hence had the possibility of repeated apprehension within the country. The observed apprehension frequencies are: $n_1 = 1645$, $n_2 = 183$, $n_3 = 37$, $n_4 = 13$, $n_5 = 1$, $n_6 = 1$.

\noindent Although the frequency pattern shows a decreasing trend, the corresponding log-log plot (Figure~\ref{PIII}) appears approximately linear, with an estimated slope of $-4.31$. The MLE of the exponent is $\hat{\nu} = 3.50$, suggesting a rough alignment between the visual and inferential estimates.

\noindent However, it is important to note that an approximately linear log-log plot does not guarantee that the data truly follow a DPareto distribution. To formally assess the fit, we apply the proposed test statistic, which yields a bootstrap $p$-value below the 5\% significance level (see Table~\ref{SIII}). This leads to a statistically significant rejection of the null hypothesis, indicating that the data deviate from the assumptions of the DPareto model.

\noindent This finding highlights a critical point: while a log-log plot may visually suggest power-law behavior, it may mask subtle but systematic deviations from the model that a formal statistical test can detect. Such discrepancies are especially likely when deviations occur in the lower range of $x$, which have a disproportionate effect on the overall fit but may appear visually negligible on a log scale.

\medskip

\noindent 
Our analysis reveals a striking contrast between the two groups. The DPareto distribution provides a good fit for the "Effectively Expelled" individuals, who are less likely to be repeatedly apprehended. In contrast, the "Not Effectively Expelled" group shows statistically significant deviation from the DPareto model despite exhibiting an approximately linear log-log trend. This discrepancy underscores the importance of using formal gof tests in conjunction with graphical tools and highlights the influence of behavioral, procedural, or enforcement-related heterogeneity in shaping apprehension patterns. These findings demonstrate the utility of the proposed test in distinguishing between true and apparent adherence to power-law behavior in count data.

\begin{figure}[H]
\centering
\begin{minipage}[t]{0.495\textwidth}
  \centering
  \includegraphics[width=\linewidth]{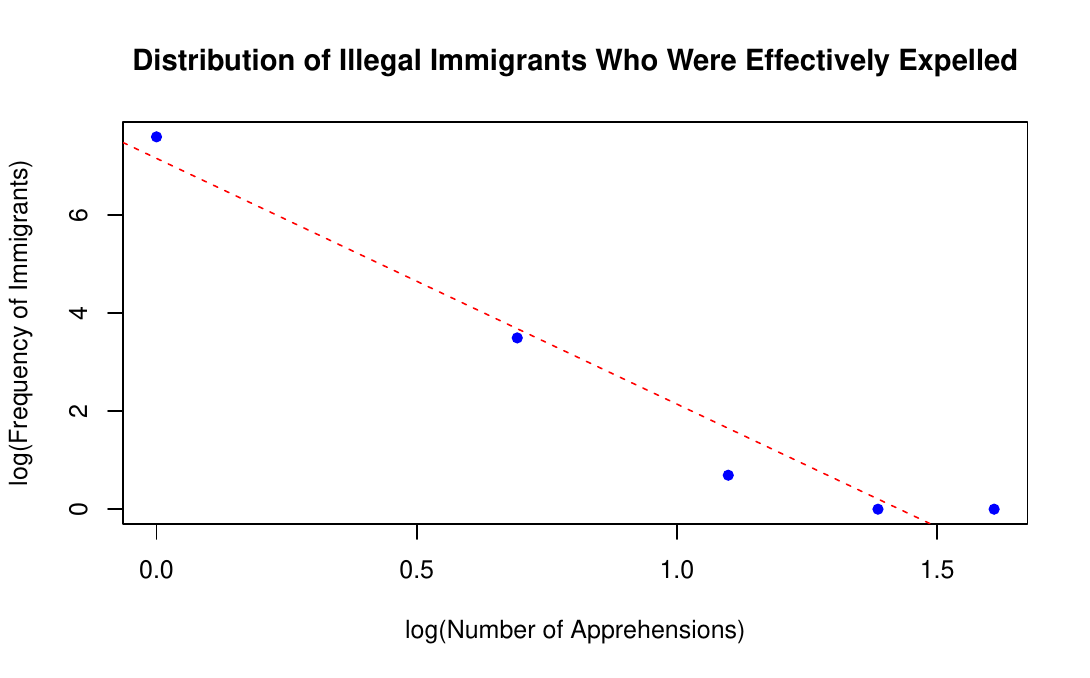}
  \caption{Log-Log plot of Illegal Immigrants who were Effectively Expelled.}
  \label{PII}
\end{minipage}
\hfill
\begin{minipage}[t]{0.475\textwidth}
  \centering
  \includegraphics[width=\linewidth]{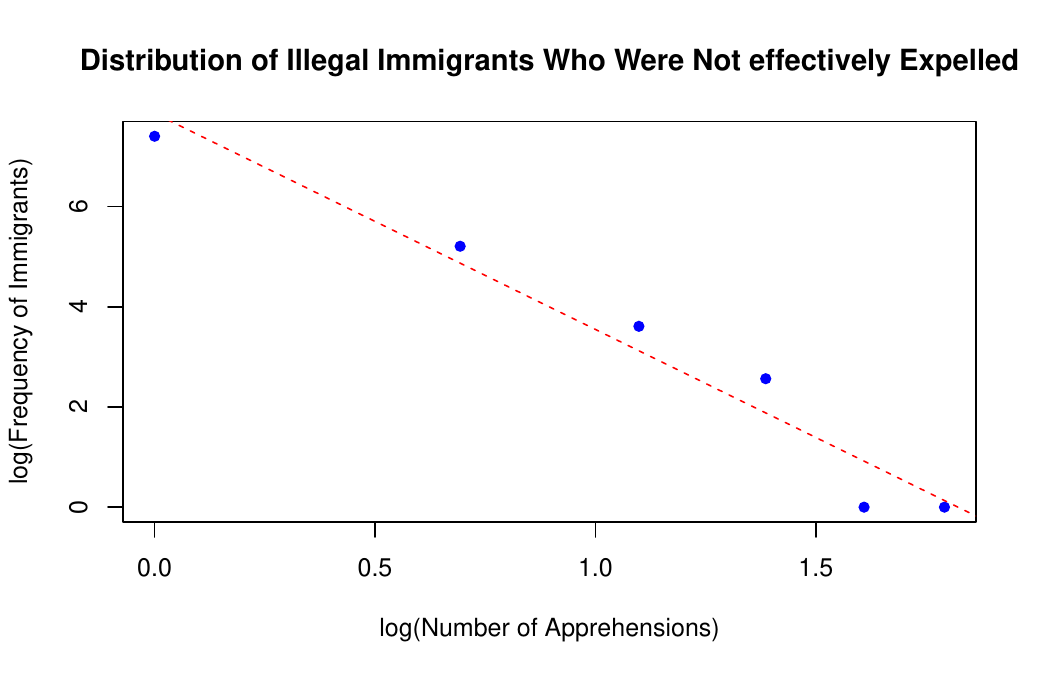}
  \caption{Log-Log plot of Illegal Immigrants who were Not Effectively Expelled.}
  \label{PIII}
\end{minipage}
\end{figure}

\begin{table}[t]
\centering
\begin{minipage}[t]{0.49\textwidth}
  \centering
  \begin{tabular}{ccc} \hline
    Tests & TS Value & p-value \\ \hline
    $\mathfrak{K}$ & $5.05 \times 10^{-6}$ & 0.42 \\
    $Z_{n,0.5}$ & $1.81 \times 10^{-4}$ & 0.53 \\
    $Z_{n,1}$ & $3.99 \times 10^{-5}$ & 0.52 \\
    $Z_{n,2}$ & $5.80 \times 10^{-6}$ & 0.49 \\ 
    $T_{n,0}$  & $4.49 \times 10^{-2}$ & 0.75  \\
    $T_{n,1}$  & $2.21 \times 10^{-2}$ & 0.76  \\
    $T_{n,2}$  & $1.24 \times 10^{-2}$ & 0.76   \\
    $T_{n,3}$  & $7.67 \times 10^{-3}$ & 0.76   \\
    $T_{n,4}$  & $5.06 \times 10^{-3}$ & 0.77   \\
    $T_{n,5}$  & $3.51 \times 10^{-3}$ & 0.77   \\
    $S_{BEN} $ & $4.81 \times 10^{-7}$ & 0.63   \\
    $C_n^e$ & $2.57 \times 10^{-4}$ & 0.57 \\  \hline
  \end{tabular}
  \vspace{0.2cm}
  \caption{Summary of results for the illegal immigrants who were effectively expelled}
  \label{SII}
\end{minipage}
\hfill
\begin{minipage}[t]{0.49\textwidth}
  \centering
  \begin{tabular}{ccc} \hline
    Tests & TS Value & p-value \\ \hline
    $\mathfrak{K}$ & $0.42$ & $<0.001$ \\
    $Z_{n,0.5}$ & $0.53$ & $<0.001$ \\
    $Z_{n,1}$ & $0.51$ & $<0.001$ \\
    $Z_{n,2}$ & $0.49$ & $<0.001$ \\ 
    $T_{n,0}$  & 2.93 & $<0.001$    \\
    $T_{n,1}$  & 1.49  & $<0.001$    \\
    $T_{n,2}$  & 0.86  & $<0.001$   \\
    $T_{n,3}$ & 0.54   & $<0.001$    \\
    $T_{n,4}$ & 0.36   & $<0.001$    \\
    $T_{n,5}$ & 0.25   & $<0.001$   \\
    $S_{BEN}$ & $2.35 \times 10^{-4}$ & $<0.001$    \\
    $C_n^e$ & 0.26     & $<0.001$  \\ \hline
  \end{tabular}
  \vspace{0.2cm}
  \caption{Summary of results for the illegal immigrants who were not effectively expelled}
  \label{SIII}
\end{minipage}
\end{table}

\section{Conclusion and Open Problems} \label{conclusion}

In this paper, we proposed a novel gof test for the DPareto distribution based on a new Stein characterization. The test statistic was constructed by choosing a specific Stein class of test functions $\{f(x) = 1 - s^{x-1}: \;s\in(0,1)\}$ which enabled us to derive a tractable expression, derive the thoery of the testing procedure and evaluate its performance. The finite-sample behavior of the proposed test was examined through extensive Monte Carlo simulations, demonstrating its competitive power in comparison with existing test. Furthermore, the practical usefulness of the test was illustrated using two real-world data sets.

\medskip

\noindent
The work presented in this paper opens up several natural directions for further research. One possible extension is to investigate alternative Stein classes of test functions $f$. For instance, other simple polynomial forms, characteristic function-type or piecewise-defined functions may offer improved power against specific alternatives to the DPareto distribution. The proposed test could also be adapted to handle real-world scenarios where data are censored or truncated-a common occurrence in domains such as income, word frequency analysis, or web traffic, which often exhibit DPareto-like behavior. Additionally, extending the Stein characterization approach to other discrete heavy-tailed or power-law distributions, such as the Zipf--Mandelbrot or Yule--Simon distributions, would broaden the applicability of the proposed methodology. Furthermore, another interesting open problem is deriving explicit expressions for the eigenvalues of the covariance operator 
$T$ in Theorem \ref{thm:null-limit}. While a closed-form analytical treatment may be infeasible, the numerical methods developed in \cite{EJM:2025} should be applicable. Moreover, since these eigenvalues depend on the unknown parameter $\nu_0$, accurate and fast implementation of case by case approximation could obviate the need for a bootstrap procedure \citep[Section 5]{EJM:2025}. A detailed local power (in the sense of contiguous alternatives) analysis is beyond the scope of this paper, but could be carried out by perturbing the DPareto pmf and using the Hilbert-space methodology. We expect the test to be locally asymptotically optimal in some directions, but this remains as an open research question.

\section{Disclosure statement} No potential conflict of interest was reported by the authors.
\section{Data Availability Statement} The data used in this study are derived from previously published sources and are included within this article. 
\section{Acknowledgements} 
Sakshi Khandelwal would like to express sincere gratitude to the Department of Science and Technology (DST) for the generous support provided through the Inspire Fellowship  (Inspire code: IF220245).

\bibliographystyle{apalike} 
\bibliography{bibtex}

\begin{appendix}
\section{Proofs}\label{app:Vstat-proofs}
\begin{proof}[Proof of Theorem~\ref{chth}]
The proof of the necessary part follows directly by obtaining the right-hand side expectation for the DPareto distribution given as
\begin{align*}
\mathbb{E} \bigg( \bigg(1- \bigg(\frac{X}{X+1}\bigg)^\nu \bigg) \, s \, (1-s^X)\bigg)
&= \mathbb{E}\left(s(1-s^X\right))-\mathbb{E}\left(s(1-s^X)\left(\frac{X}{X+1}\right)^\nu\right)\\
&= s-s\frac{\text{Li}_\nu(s)}{\zeta(\nu)}-\frac{s}{\zeta(\nu)}(\zeta(\nu)-1)+\frac{1}{\nu}\sum_{k=1}^{\infty} \frac{s^{k+1}}{(k+1)^\nu} \\
&=s-s\frac{\text{Li}_\nu(s)}{\zeta(\nu)}-s+\frac{s}{\zeta(\nu)}+\frac{\text{Li}_\nu(s)}{\zeta(\nu)}-\frac{s}{\zeta(\nu)} \\
&=(1-s)\frac{\text{Li}_\nu(s)}{\zeta(\nu)}=(1-s)G_X(s).
\end{align*}
which is required left-hand side. Now, to prove the sufficient part of the theorem, suppose that the random variable $X$ satisfies the functional equation (\ref{eq:NullStein}), which characterize the DPareto distribution. To prove this, let $p_X(x)$ be any pmf on $\mathbb{N}$ satisfying equation \eqref{eq:NullStein} for all $s\in (0,1)$. Utilizing the definition of pgf $G_X(s)=\sum_{k=1}^{\infty}s^k p_X(k)$ in (\ref{eq:NullStein}) and matching the coefficient of $s^k$ for $k\ge 2$ we get,
\begin{equation*}
p_X(k)-p_X(k-1)= -\left(1-\left(\frac{k-1}{k}\right)^\nu \right) p_X(k-1)
\end{equation*}
reduces to 
\begin{equation*}
p_X(k)=\left(\frac{k-1}{k}\right)^\nu p_X(k-1)=p_X(1) \prod_{j=2}^{k} \left(\frac{j-1}{j} \right)^\nu =p_X(1)\frac{1}{k^\nu}.
\end{equation*} 
Normalization yields $p_X(1)=1/\zeta(\nu)$ and $p_X(k)=(k^\nu \zeta(\nu))^{-1}$. Thus the family $p_X(x)$ uniquely characterizes the DPareto law. 
\end{proof}
\noindent The following Lemmas are used in the proof of Theorem \ref{thm:null-limit}.
\begin{lem}\label{lem:bounds-app}
Fix a compact interval $I\subset(1,\infty)$. Then, for all $\nu\in I$, $x\in\mathbb{N}$ and $s\in(0,1)$,
\[
|g_\nu(x,s)|\le 3,\qquad 
|\partial_\nu g_\nu(x,s)|\le x^{-1},\qquad
|\partial_{\nu\nu}^2 g_\nu(x,s)|\le x^{-2},
\]
and
\[
\partial_\nu g_\nu(x,s)=\Big(\frac{x}{x+1}\Big)^\nu \log\Big(\frac{x}{x+1}\Big)\,(1-s^x),\qquad
\partial_{\nu\nu}^2 g_\nu(x,s)=\Big(\frac{x}{x+1}\Big)^\nu \Big(\log\Big(\frac{x}{x+1}\Big)\Big)^2\,(1-s^x).
\]
\end{lem}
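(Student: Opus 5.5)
The plan is a direct computation. Everything is explicit once we write $g_\nu(x,s)$ in terms of the single quantity $r_x:=x/(x+1)\in(0,1)$. Then
\[
g_\nu(x,s)=\bigl(s^{x-1}-1\bigr)+r_x^{\nu}\,(1-s^x),
\]
and the only $\nu$-dependence sits in the factor $r_x^\nu=\exp(\nu\log r_x)$.

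\emph{Derivative formulas.} First I will establish the two formulas for the derivatives. Since $s^{x-1}-1$ and $1-s^x$ do not depend on $\nu$, we only need to differentiate $r_x^\nu$. Using $\partial_\nu r_x^\nu=r_x^\nu\log r_x$ once and then a second time gives exactly
\[
\partial_\nu g_\nu(x,s)=r_x^\nu\log(r_x)(1-s^x),\qquad
\partial^2_{\nu\nu} g_\nu(x,s)=r_x^\nu(\log r_x)^2(1-s^x).
\]
Smoothness in $\nu$ is clear because $r_x^\nu$ is an entire function of $\nu$ for fixed $x$.

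\emph{Bound on $g_\nu$.} For $s\in(0,1)$ and $x\in\mathbb{N}$ we have $s^{x-1}\in(0,1]$ and $s^x\in(0,1)$, so $|s^{x-1}-1|\le1$ and $0\le 1-s^x\le1$. Since $\nu>0$ and $0<r_x<1$, also $0<r_x^\nu\le1$. The triangle inequality then gives $|g_\nu(x,s)|\le 2\le 3$.

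\emph{Bounds on the derivatives.} Combining $r_x^\nu\le1$ and $1-s^x\le1$ with the derivative formulas yields $|\partial_\nu g_\nu|\le|\log r_x|$ and $|\partial^2_{\nu\nu}g_\nu|\le(\log r_x)^2$. It remains to control $|\log r_x|=\log(1+1/x)$, for which I use the elementary inequality $\log(1+u)\le u$ for $u\ge0$. This gives $|\log r_x|\le x^{-1}$ and $(\log r_x)^2\le x^{-2}$, which are the claimed bounds.

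All bounds in fact hold uniformly for every $\nu>0$, so restricting to a compact $I\subset(1,\infty)$ is not needed here; it only matters for later use, for instance to dominate remainder terms in Taylor expansions around $\nu_0$. I see no real obstacle. The only point requiring care is the sign bookkeeping: $\log r_x<0$, so absolute values must be taken before applying $\log(1+u)\le u$.
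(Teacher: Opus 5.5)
Your proof is correct and is essentially the paper's argument. You differentiate the only $\nu$-dependent factor $(x/(x+1))^\nu$, then bound using $0<s^{x}\le 1$, $0<(x/(x+1))^\nu\le 1$ and $\log(1+1/x)\le 1/x$; your remarks that the bounds hold for every $\nu>0$ and that the triangle inequality already gives $|g_\nu|\le 2$ are also accurate.
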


\begin{proof}
The expressions follow by differentiation. The bounds use $0<s^x\le 1$, $0<s(1-s^x)\le 1$,
$0\le (x/(x+1))^\nu\le 1$, and $\log(1+1/x)\le 1/x$.
\end{proof}

\begin{lem}\label{lem:kernel-regularity-app}
The kernel $k$ satisfies $k\in L^2(P_{\nu_0}\times P_{\nu_0})$ and is positive semidefinite. Hence the operator
$T$ defined in Theorem~\ref{thm:null-limit} is self-adjoint, nonnegative and Hilbert--Schmidt.
\end{lem}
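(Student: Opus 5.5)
The plan is to show that $\psi_{\nu_0}(\cdot,s)$ is uniformly bounded in $s$ by a square-integrable function of $x$. Square-integrability of $k_{\nu_0}$ and positive semidefiniteness then follow from its representation as an $L^2(0,1)$ inner product.

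\textbf{Step 1 (pointwise bound on $\psi_{\nu_0}$).} By Lemma~\ref{lem:bounds-app}, $|g_{\nu_0}(x,s)|\le 3$ and $|\partial_\nu g_{\nu_0}(x,s)|\le x^{-1}\le 1$ for all $x\in\mathbb{N}$ and $s\in(0,1$). Hence
\[
|a(s)|=\bigl|\mathbb{E}_{\nu_0}[\partial_\nu g_{\nu_0}(X,s)]\bigr|\le \mathbb{E}_{\nu_0}[X^{-1}]\le 1 .
\]
It follows that
\[
|\psi_{\nu_0}(x,s)|\le 3+|\ell(x,\nu_0)|=:M(x),
\]
and $M\in L^2(P_{\nu_0})$ because $\mathbb{E}(\ell(X_1,\nu_0)^2)<\infty$ by assumption \eqref{eq:linear-rep}. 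For the MLE this holds concretely, since $\ell$ grows like $\log x$ and $\sum_k (\log k)^2 k^{-\nu_0}<\infty$.

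\textbf{Step 2 ($k\in L^2(P_{\nu_0}\times P_{\nu_0})$).} By Cauchy--Schwarz in $L^2(0,1)$,
\[
|k_{\nu_0}(x,y)|\le \|\psi_{\nu_0}(x,\cdot)\|_{L^2(0,1)}\,\|\psi_{\nu_0}(y,\cdot)\|_{L^2(0,1)}\le M(x)M(y).
\]
Therefore
\[
\sum_{x,y}k_{\nu_0}(x,y)^2p_{\nu_0}(x)p_{\nu_0}(y)\le \Bigl(\mathbb{E}_{\nu_0}M(X)^2\Bigr)^2<\infty .
\]
Measurability in $s$ is not an issue, since every quantity is continuous in $s$.

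\textbf{Step 3 (positive semidefiniteness).} Take finitely many points $x_1,\dots,x_m$ and reals $c_1,\dots,c_m$. Then
\[
\sum_{i,j}c_ic_jk_{\nu_0}(x_i,x_j)=\int_0^1\Bigl(\sum_i c_i\psi_{\nu_0}(x_i,s)\Bigr)^2ds\ge 0 .
\]
In operator form the same argument gives, for $f\in L^2(P_{\nu_0})$,
\[
\langle Tf,f\rangle=\int_0^1\Bigl(\mathbb{E}_{\nu_0}[\psi_{\nu_0}(X,s)f(X)]\Bigr)^2ds\ge 0 .
\]
The interchange of sum and integral here is justified by Fubini, using the dominating bound $M(x)M(y)|f(x)||f(y)|$, which is summable by Cauchy--Schwarz. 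Symmetry $k(x,y)=k(y,x)$ is immediate from the definition, so $T$ is self-adjoint.

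\textbf{Step 4 (Hilbert--Schmidt).} An integral operator whose kernel lies in $L^2(P_{\nu_0}\times P_{\nu_0})$ is Hilbert--Schmidt. This is the standard fact $\|T\|_{HS}=\|k\|_{L^2}$, and it implies compactness, so $T$ has a discrete nonnegative spectrum as used in Theorem~\ref{thm:null-limit}.

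The only point needing care is Step 1. The boundedness of $a(s)$ uniformly in $s$ and the square-integrability of $\ell$ are what drive everything, and these come directly from Lemma~\ref{lem:bounds-app} and the standing assumptions. The remaining steps are routine Hilbert-space facts.
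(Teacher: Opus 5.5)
Your proposal is correct and follows essentially the same route as the paper: Cauchy--Schwarz in $L^2(0,1)$ combined with square-integrability of $\psi_{\nu_0}(X,\cdot)$ (from boundedness of $g_{\nu_0}$ and $a$ together with $\mathbb{E}(\ell(X,\nu_0)^2)<\infty$), and positive semidefiniteness via the representation $\int_0^1\bigl(\sum_i c_i\psi_{\nu_0}(x_i,s)\bigr)^2ds\ge 0$. Your explicit envelope $M(x)=3+|\ell(x,\nu_0)|$ (from $|a(s)|\le 1$) and your Fubini-justified computation of $\langle Tf,f\rangle\ge 0$ only spell out details that the paper's terse proof leaves implicit.
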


\begin{proof}
Square-integrability follows from Cauchy--Schwarz
\begin{equation*}
    |k_{\nu_0}(x,y)|^2\le \|\psi_{\nu_0}(x,\cdot)\|_{L^2(0,1)}^2\|\psi_{\nu_0}(y,\cdot)\|_{L^2(0,1)}^2
\end{equation*}
and $\mathbb{E}\|\psi_{\nu_0}(X,\cdot)\|_{L^2}^2<\infty$ (since $g_{\nu_0}$ is bounded, $a\in L^2$, $\mathbb{E}\left(\ell(X,\nu_0)^2\right)<\infty$).
Positive semidefiniteness holds since
$\sum_{i,j}c_ic_jk(x_i,x_j)=\int_0^1(\sum_i c_i\psi_{\nu_0}(x_i,s))^2ds\ge 0$.
\end{proof}

\begin{proof}[Proof of Theorem~\ref{thm:null-limit}]
We apply the theory given in \citet[Proposition 1]{LN:2013}. Under the assumptions of the section, $(X_i)_{i\ge1}$ is iid\ with law $P_{\nu_0}$. Thus, the dependence assumptions in (A2) (i) of 
\citet[Section~3]{LN:2013} hold trivially in the iid\ case. Symmetry is immediate from $h_\nu(x,y)=h_\nu(y,x)$.
For nonnegative definiteness, for any $m\in\mathbb{N}$, $c_1,\dots,c_m\in\mathbb{R}$ and $x_1,\dots,x_m\in\mathbb{N}$,
\[
\sum_{i,j=1}^m c_i c_j h_{\nu}(x_i,x_j)
=\int_0^1 \Big(\sum_{i=1}^m c_i g_\nu(x_i,s)\Big)^2 ds\ge 0,
\]
so $h_\nu$ is positive semidefinite and hence satisfies (A1) (ii). Next, using the uniform bound $|g_\nu(x,s)|\le 3$ from Lemma~\ref{lem:bounds-app},
\[
0\le h_{\nu_0}(x,x)=\int_0^1 g_{\nu_0}(x,s)^2\,ds\le9,
\]
hence $\mathbb{E}_{\nu_0}[h_{\nu_0}(X,X)]\le 9<\infty$ hence (A1) (iii) is satisfied. Finally, by the assumed Stein identity
\eqref{eq:Stein_id}, for any $x\in\mathbb{N}$,
\[
\mathbb{E}_{\nu_0}\!\big[h_{\nu_0}(x,X)\big]
=\int_0^1 g_{\nu_0}(x,s)\,\mathbb{E}_{\nu_0}[g_{\nu_0}(X,s)]\,ds
=0.
\]
Thus $h_{\nu_0}$ is $1$-degenerate under $P_{\nu_0}$ in the sense required by (A1) (iv).

The only $\nu$-dependence in $g_\nu(x,s)$ occurs through $\big(x/(x+1)\big)^\nu$.
Lemma~\ref{lem:bounds-app} yields the derivative bounds
$|\partial_\nu g_\nu(x,s)|\le x^{-1}$ and $|\partial_{\nu\nu}^2 g_\nu(x,s)|\le x^{-2}$ uniformly for $\nu$ in a compact
neighborhood of $\nu_0$ and for all $s\in(0,1)$.
In particular, the map $\nu\mapsto g_\nu(x,\cdot)$ is twice Fr\'echet differentiable as an element of $L^2(0,1)$,
with derivatives dominated by integrable bounds under $P_{\nu_0}$ because
$\mathbb{E}_{\nu_0}[X^{-2}]<\infty$ for $\nu_0>1$.

Moreover, the estimator $\widehat\nu_n$ satisfies the linear representation \eqref{eq:linear-rep}
with $\mathbb{E}_{\nu_0}[\ell(X,\nu_0)]=0$ and $\mathbb{E}_{\nu_0}[\ell(X,\nu_0)^2]<\infty$ by assumption. By Lemma \ref{lem:kernel-regularity-app}, $k\in L^2(P_{\nu_0}\times P_{\nu_0})$ and $k$ is positive semidefinite. The differentiability bounds together with \eqref{eq:linear-rep} imply the $L^2(0,1)$ linearization
\[
\sqrt{n}\,\bar g_n(\cdot;\widehat\nu_n)
=\frac{1}{\sqrt{n}}\sum_{i=1}^n \psi_{\nu_0}(X_i,\cdot)+o_{\mathbb{P}}(1)
\qquad\text{in }L^2(0,1),
\]
where $\bar g_n(s;\nu)=n^{-1}\sum_{i=1}^n g_\nu(X_i,s)$, $s\in(0,1)$.
Taking squared $L^2$-norms and expanding yields
\[
\mathfrak{K}
=n\|\bar g_n(\cdot;\widehat\nu_n)\|_{L^2(0,1)}^2
=\frac1n\sum_{i,j=1}^n k(X_i,X_j)+o_{\mathbb{P}}(1)
=nV_n(k)+o_{\mathbb{P}}(1).
\]
This is precisely the reduction asserted by \citet[Proposition~1]{LN:2013} for estimated-parameter degenerate V-statistics in our iid\ setting.

Under \eqref{eq:Stein_id} and $\mathbb{E}_{\nu_0}[\ell(X,\nu_0)]=0$ we have
$\mathbb{E}_{\nu_0}[\psi_{\nu_0}(X,s)]=0$ for all $s$, hence $k$ is $1$-degenerate
\[
\mathbb{E}_{\nu_0}[k(x,X)]
=\int_0^1 \psi_{\nu_0}(x,s)\,\mathbb{E}_{\nu_0}[\psi_{\nu_0}(X,s)]\,ds
=0,\qquad \forall x\in\mathbb{N}.
\]
Let $T:L^2(P_{\nu_0})\to L^2(P_{\nu_0})$ be the Hilbert--Schmidt operator
\[
(Tf)(x):=\int_{\mathbb{N}} k_{\nu_0}(x,y)f(y)\,dP_{\nu_0}(y).
\]
Since $k$ is symmetric, positive semidefinite and belongs to $L^2(P_{\nu_0}\times P_{\nu_0})$, $T$ is compact, self-adjoint and nonnegative. Denote by $\{\lambda_m\}_{m\ge1}$ its decreasing positive sequence of eigenvalues. The degenerate V-statistic limit theorem \cite[Theorem~1]{LN:2013}
then yields
\[
nV_n(k)\ \stackrel{d}{\rightarrow} \sum_{m=1}^\infty \lambda_m Z_m^2,
\]
where $Z_1,Z_2,\dots$ are iid\ $N(0,1)$. Since $\mathfrak{K}=nV_n(k)+o_{\mathbb{P}}(1)$, Slutsky's theorem yields the stated result.
\end{proof}
\end{appendix}
\end{document}